\documentclass[aps,pra,superscriptaddress,longbibliography]{revtex4-2}

\usepackage{amsmath,amssymb,amsfonts,amsthm}
\usepackage{enumerate}
\theoremstyle{plain}
\newtheorem{theorem}{Theorem}[section]
\theoremstyle{remark}

\usepackage{graphicx}
\usepackage{hyperref}
\usepackage[english]{babel}

\begin{document}

\title{Gauged $SU(4)$ Strain Symmetry and the Non-Abelian Field Theory of Epidemiological Safe Zones: The Black Death Case}

\author{Jos\'e de Jes\'us Bernal-Alvarado}
\affiliation{Physics Engineering Department, Universidad de Guanajuato, M\'exico}
\author{David Delepine}
\affiliation{Physics Department, Universidad de Guanajuato, M\'exico}

\date{\today}

\begin{abstract}
Classical reaction-diffusion SIR models cannot explain why large geographic
regions -- such as the interior of Poland and Bohemia during the 14th-century
Black Death -- remained largely untouched while surrounding regions collapsed.
We argue that this anomaly is a natural consequence of a gauge-driven mechanism
acting on the four-lineage genetic structure of \emph{Yersinia pestis}.

Motivated by the paleogenomic ``Big Bang'' polytomy ($N=4$ branches), we
promote the strain index to an $SU(4)$ internal degree of freedom and
construct a Doi-Peliti field theory with a background spatial gauge field
$A_i(x)$ that encodes geographic and mobility heterogeneity. The local gauge
symmetry drives a Differential-Flow Turing-Hopf instability whose traveling
modes interfere to produce safe zones with a Bessel-function radial profile
$J_0(k_c r)$. We determine $k_c$ not from the observed safe-zone radius but
directly from the full $4\times4$ stability matrix, over a broad, randomized
sampling of the six symmetry-breaking cross-immunity couplings and two
diagonal reaction rates left unconstrained by 14th-century data, restricted
to draws satisfying the physical stability precondition of the underlying
theorem. Taking the diagonal gauge correlation length $\ell\sim1/A$ in the
range $25$--$30$~km -- independently consistent with documented medieval
carrier travel distances and market-town spacing -- the resulting
$R_{\rm safe}=2.4048/k_c$ is centered on $195$--$204$~km, in agreement with
the observed Poland-Bohemia anomaly ($150$--$200$~km), for $94$--$100\%$ of
the physically admissible parameter draws.

We further isolate the model's genuinely mutation-mediating channel, the
off-diagonal gauge field $A_i^{\rm off\text{-}diag}$, from reaction-sector
cross-immunity and diagonal advection, and derive the resulting mutation-wave
velocity in closed form. Above a threshold set by the intrinsic fitness
asymmetry between lineages, this velocity rises steeply with the off-diagonal
gauge amplitude, so that requiring agreement with the independently measured
paleogenomic lineage-radiation rate ($160$--$640$~km/year) fixes the
off-diagonal correlation length to $\ell_{\rm off}\simeq32$--$38$~km --
comparable to, and consistent with, the diagonal mobility scale, as expected
if both channels respond to the same regional environmental and mobility
heterogeneity.
\end{abstract}

\maketitle

\section{Introduction}
\label{sec:branches}
During the first pandemic wave of the Black Death (1346--1353), 
all of continental Europe suffered catastrophic mortality, with demographic
losses estimated at one-third to one-half of the affected population in
many regions~\cite{benedictow2004}. Against this background, the interior
of Poland and parts of Bohemia constitute a macroscopic anomaly: these
regions show substantially lower mortality, forming a large geographic
``safe zone'' surrounded by devastated territory~\cite{izdebski2022palaeoecological,christakos2005,olea2005,christakos2007}. The anomaly is not
explained by geographic isolation -- Poland was not bordered by impassable
barriers -- nor by documented quarantine measures comparable to those of
later pandemics. Benedictow~\cite{benedictow2004} documents the spatial
extent of this survival island at $150$--$200$ km in radius from the most
protected core.

Classical reaction-diffusion and metapopulation SIR models treat epidemic
waves as approximately homogeneous fronts of a single pathogen sweeping
across a landscape~\cite{noble1974geographic,keelingrohani}. Such models produce epidemic waves that
spread outward from an origin, but they generate no mechanism by which
a large contiguous region at the center of a spreading epidemic can remain
spared. The Poland-Bohemia anomaly thus poses a genuine explanatory
challenge to classical epidemic theory.

Paleogenomic analyses of ancient \emph{Yersinia pestis} DNA from medieval
burial sites have established that the Black Death was  caused by a rapid evolutionary radiation from
a common ancestral strain around 1346~CE~\cite{cui2013,spyrou2019}.
This radiation, sometimes called the ``Big Bang'' polytomy, produced at
least four major phylogenetic branches (Branches~1--4 of the $1.\text{PRE}$
clade) that diverged almost simultaneously and spread in different
geographic directions across Eurasia.

Ancient \emph{Y.~pestis} genomes confirmed as 14th-century Black Death
strains cluster near the root of this polytomy \cite{spyrou2019}. The polytomy structure is the key genomic
fact motivating the $N=4$ multiplet of our model.

Standard compartmental models (SIR, SEIR) describe the average behavior
of large populations via coupled ordinary differential equations~\cite{kermack1927,hethcote2000,keelingrohani}. But they discard all information about
demographic stochasticity. For spatial epidemic models, this
stochasticity matters: it generates fluctuations that can seed new outbreaks
and shapes the statistics of rare events such as long-distance
introductions~\cite{risken1989,strogatz2015}.

The Doi-Peliti formalism~\cite{doi1976,peliti1985,tauber2014critical,tauber2005,lefevre2007} allows us to generate path-integral field theories for stochastic
birth-death-type processes, recovering deterministic mean-field equations
as the saddle-point of the path integral and retaining fluctuation
corrections at higher orders. It was originally developed for population
dynamics and reaction-diffusion systems  as for instance the predator-prey (Lotka-Volterra)
stochastic master equation. In the Doi-Peliti
framework, each population compartment is represented by a bosonic
field, and the stochastic process is encoded in a Liouvillian operator
whose coherent-state path-integral representation defines a field theory
with a characteristic action $S_{\rm eff}=S_0+S_{\rm int}$~\cite{bernal2026gauge}. The free
kinetic action $S_0$ encodes transport (diffusion and advection), while
the interaction action $S_{\rm int}$ encodes local infection and recovery
events. We construct this action  for the four-lineage
\emph{Y.~pestis} system in Sec.~\ref{sec:IIB}.

The  paper is organized as follows.
Section~\ref{sec:IIA} develops the metapopulation SIR model for the $N=4$
strain multiplet and derives the form of the mobility matrix
$\mathbf D(x,t)$ from branch-conservation arguments.
Section~\ref{sec:IIB} constructs the $SU(4)$ Doi-Peliti field theory:
the master equation, the coherent-state path integral, the identification
of $SU(4)$ as the kinetic symmetry group, the gauging procedure, the
status of $A_i(x)$ as a background field, and the comparison with
classical metapopulation SIR.
Section~\ref{sec:III} proves the gauge-driven Turing-Hopf instability for
general $SU(N)$ via root-space decomposition (Theorem~\ref{thm:TH}),
verifies numerically that this instability persists in the full $N=4$
system over a broad sampling of the couplings left unconstrained by
14th-century data (Sec.~\ref{sec:fullmatrix}), and computes the critical
wavenumber $k_c$ for $SU(2)$ and $SU(4)$ as the wavenumber of maximum
growth (Sec.~\ref{sec:kc-su2-su4}). Section~\ref{sec:Rsafe-def} defines
the safe-zone radius from the general statistics of an
isotropically-selected critical wavenumber, valid at any $N$;
Sec.~\ref{sec:Rsafe-prediction} applies it to the Black Death, predicting
$R_{\rm safe}$ from the full stability matrix and an independent
historical estimate of the gauge correlation length;
Sec.~\ref{sec:IVD} derives the mutation-wave velocity from the
off-diagonal gauge sector and uses it to constrain that sector's
correlation length. Section~V is for discussions and Section VI is the
conclusions.

\section{From Metapopulation SIR to Gauged Doi-Peliti Field Theory}

\subsection{Metapopulation SIR and the continuum limit}
\label{sec:IIA}

Metapopulation SIR models are the standard tool for describing infectious
disease spread across a set of spatially distinct sub-populations (cities,
districts, or regions) coupled by host mobility, rather than a single
well-mixed population~\cite{keelingrohani,pastorsatorras2015}. They have been used to describe
the geographic spread of other epidemic diseases across networks of cities,
and, historically, the spread of plague along  commercial  and pilgrimage
routes. The essential ingredients are
\begin{enumerate}
    \item local SIR-type reaction kinetics
within each patch and
\item a mobility operator coupling patches, which in the
continuum limit becomes an effective spatial diffusion (for asymmetric
mobility, an advective bias)~\cite{edelstein2005,perthame2007}. 
\end{enumerate}
We adapt this model  generalizing it to the four-branch strain multiplet.

Let $S_i(t)$ denote the (strain-blind) susceptible population in patch $i$,
and let
\begin{equation}
\mathbf I_i(t) = \big(I_i^{(1)}(t),\dots,I_i^{(4)}(t)\big)^{\top}
\label{eq:Ii-def}
\end{equation}
denote the four-component vector of infected hosts in patch $i$, one entry
per branch of the Big Bang polytomy. Let $W_{ij}$ be the per-capita
mobility rate \emph{from} patch $j$ \emph{to} patch $i$ (the first index
is always the destination, the second the origin).

Pure migration in patch $i$ has two contributions: hosts \emph{arriving}
from every other patch $j$, at rate $W_{ij}S_j$, and hosts \emph{leaving}
patch $i$ toward every other patch $k$, at total per-capita rate
$\sum_{k\neq i}W_{ki}$.
Explicitly,
\begin{equation}
\left.\frac{dS_i}{dt}\right|_{\rm migration}
= \underbrace{\sum_{j\neq i} W_{ij}\,S_j}_{\text{inflow to } i}
\;-\;
\underbrace{\Big(\sum_{k\neq i} W_{ki}\Big) S_i}_{\text{outflow from } i}.
\label{eq:migration-inout}
\end{equation}
To write this compactly as a single sum $\sum_{j=1}^{M}L_{ij}S_j$
(including the $j=i$ term), the diagonal entry must reproduce the
outflow term, which fixes
\begin{equation}
L_{ij}=W_{ij}\ \ (i\neq j), \qquad
L_{ii}=-\sum_{k\neq i}W_{ki}
\quad\text{(total per-capita emigration rate out of patch $i$).}
\label{eq:Lij-def}
\end{equation}
With this definition, the metapopulation dynamics read
\begin{align}
\frac{dS_i}{dt} &= -S_i\, \boldsymbol\beta^\top \mathbf I_i
 + \sum_{j=1}^M L_{ij} S_j, \label{eq:metapop-S}\\
\frac{d\mathbf I_i}{dt} &= S_i\,\mathrm{diag}(\boldsymbol\beta)\,\mathbf I_i
 - \mathrm{diag}(\boldsymbol\gamma)\,\mathbf I_i
 + \sum_{j=1}^M L_{ij}\, \mathbf D_j\, \mathbf I_j,
\label{eq:metapop-I}
\end{align}
where we have separated the scalar spatial-coupling coefficient $L_{ij}$
from a $4\times4$ matrix $\mathbf D_j$ acting purely in strain space.

\paragraph{Structure of $\mathbf D$ from conservation of probability.}
Any such matrix can be written in the strain-space Lie-algebra basis as
\begin{equation}
\mathbf D = D_0\,\mathbb 1_4 + \sum_{a=1}^{15} d_a\, T^a,
\label{eq:Dmatrix}
\end{equation}
the identity plus the fifteen generators of $\mathfrak{su}(4)$ -- the unique
decomposition of a general Hermitian $4\times4$ matrix, with
$D_0=\tfrac14\mathrm{Tr}\,\mathbf D$ and $d_a=2\,\mathrm{Tr}(\mathbf D\,T^a)$.
Pure host mobility, however, is a strain-\emph{preserving} process: a host
carrying branch $(k)$ who moves from patch $j$ to patch $i$ still carries
branch $(k)$ on arrival. Conversion between branches is a mutation
(reaction) event, not a transport event, and belongs to $S_{\rm int}$ (or, in
the gauged theory of Sec.~\ref{sec:IIB}, to the off-diagonal components of
$A_\mu$) -- not to $\mathbf D$. Requiring that $\mathbf D$ commute with the
projector onto each branch (equivalently, that the total population of each
branch be separately conserved under $\mathbf D$ alone, in the absence of
reactions) implies  the coefficients $d_a$ of the twelve off-diagonal
(root-space) generators to be equal to zero:
\begin{equation}
\mathbf D = D_0\,\mathbb 1_4 + d_3 T^3 + d_8 T^8 + d_{15} T^{15}.
\label{eq:Dmatrix-cartan}
\end{equation}

\paragraph{Spatiotemporal dependence. }
 Geography (mountain ranges, river crossings, coastlines),
infrastructure (roads, waterways, later railways), and social structure
(trade networks, quarantine ordinances, market schedules) all vary from
place to place and evolve in time. So, one expects $\mathbf D$ to be spatially or temporally dependent: 
\begin{equation}
\mathbf D(x,t) = D_0(x,t)\,\mathbb 1_4 + d_3(x,t)\,T^3 + d_8(x,t)\,T^8 + d_{15}(x,t)\,T^{15}.
\label{eq:Dmatrix-xt}
\end{equation}
Treated purely within the classical metapopulation picture, $d_3(x,t)$,
$d_8(x,t)$, $d_{15}(x,t)$ would have to be introduced as three independent,
ad hoc position- and time-dependent functions, with no principle constraining
their form or their coupling to the reaction sector.
In the remainder of this paper we
adopt the simplest case $d_3=d_8=d_{15}=0$ for the residual classical
mobility matrix $\mathbf D$ itself, since no data
distinguishing branch-specific medieval mobility rates exist.

Embedding the patches on a lattice of spacing $\Delta x$ and assuming
nearest-neighbor mobility of characteristic length $\ell$ and rate $\lambda$,
it converges to a diffusion operator:
$$\sum_j L_{ij}\mathbf D\,\mathbf I_j \to \nabla\cdot\big[D_{\rm eff}(x,t)\nabla\mathbf I(x,t)\big]$$
with $D_{\rm eff}(x,t)\equiv D_0(x,t)\cdot\lambda\ell^2/2d$ in $d$ spatial
dimensions. An asymmetric
mobility matrix (prevailing trade or migration routes) additionally generates
an advective bias $v(x,t)\cdot\nabla\mathbf I$~\cite{marchetti2013}. Linearizing the infected
equation near a constant susceptible background $S_0$, for a single dominant
branch, and treating $D_{\rm eff}$ and $v$ as locally constant  gives the familiar scalar
advection-diffusion-reaction equation
\begin{equation}
\partial_t I = (\beta S_0-\gamma) I + D_{\rm eff}\nabla^2 I - v\cdot\nabla I,
\label{eq:fisherkpp}
\end{equation}
the standard Fisher-KPP form for an invading epidemic front~\cite{fisher1937,kolmogorov1937,murray2003mathematical,noble1974geographic}. 

\subsection{Doi-Peliti mapping and the $SU(4)$ strain multiplet}
\label{sec:IIB}

\subsubsection{Master equation and Fock-space representation}

At each lattice site $x$ we introduce bosonic creation/annihilation operators
for susceptible hosts, $a_{S,x}^\dagger, a_{S,x}$, and for infected hosts
carrying strain $i$,
\begin{equation}
a_{I,i,x}^\dagger,\ a_{I,i,x}, \qquad i=1,\ldots,4,
\end{equation}
satisfying canonical commutation relations $[a_{S,x},a_{S,y}^\dagger]=\delta_{xy}$,
$[a_{I,i,x},a_{I,j,y}^\dagger]=\delta_{ij}\delta_{xy}$. The state
$|P(t)\rangle=\sum_{\{n\}} P(\{n\},t)\,|\{n\}\rangle$  evolves as
$\partial_t|P(t)\rangle = -\hat{\mathcal L}|P(t)\rangle$, with Liouvillian
$\hat{\mathcal L}=\hat{\mathcal L}^{\rm trans}+\hat{\mathcal L}^{\rm reac}$.

The \emph{reaction} Liouvillian encodes the three elementary processes of the
multi-strain SIR model. Writing $a_S\equiv a_{S,x}$ etc.\ for brevity:
\begin{align}
\text{Infection (strain }i\text{)}: &\quad
S+I_i \xrightarrow{\beta_i} 2I_i, \nonumber\\
\hat{\mathcal L}_{\rm inf}^{(i)} &=
-\beta_i\bigl(a_{I,i}^\dagger a_{I,i}^\dagger - a_{S}^\dagger a_{I,i}^\dagger\bigr)
a_S\, a_{I,i}, \label{eq:Linf}\\[4pt]
\text{Recovery (strain }i\text{)}: &\quad
I_i \xrightarrow{\gamma_i} \varnothing, \nonumber\\
\hat{\mathcal L}_{\rm rec}^{(i)} &=
-\gamma_i\bigl(1 - a_{I,i}^\dagger\bigr)\,a_{I,i}. \label{eq:Lrec}
\end{align}
The \emph{transport} Liouvillian encodes nearest-neighbour hopping between
sites $x$ and $x+e_\mu$ with rates $\lambda_S$ and $\lambda_I$:
\begin{equation}
\hat{\mathcal L}^{\rm trans} = -\sum_\mu\sum_x\Bigl[
\lambda_S\bigl(a_{S,x+e_\mu}^\dagger - a_{S,x}^\dagger\bigr)a_{S,x}
+\sum_{i=1}^4\lambda_I\bigl(a_{I,i,x+e_\mu}^\dagger - a_{I,i,x}^\dagger\bigr)a_{I,i,x}
\Bigr].
\label{eq:Ltrans}
\end{equation}
Note that Eq.~\eqref{eq:Ltrans} uses a single hopping rate $\lambda_I$ common
to all four strains, in accordance with the uniform-mobility assumption
established in Sec.~\ref{sec:IIA}.

\subsubsection{Coherent-state path integral, $S_{\rm int}$, and recovery of SIR equations}

Introducing coherent states $|z_S,\{z_i\}\rangle$ with
$a_S|z_S\rangle=z_S|z_S\rangle$ and resolving the identity at each
infinitesimal time step, the generating functional of the stochastic process
becomes a path integral over complex fields
$\psi_S(x,t),\bar\psi_S(x,t)$ and $\psi(x,t)=(\psi_1,\ldots,\psi_4)^\top$,
$\bar\psi(x,t)=(\bar\psi_1,\ldots,\bar\psi_4)^\top$:
\begin{equation}
Z = \int \mathcal{D}[\bar\psi_S,\psi_S]\,\mathcal{D}[\bar\psi,\psi]\;
e^{-S_{\rm eff}[\bar\psi_S,\psi_S,\bar\psi,\psi]},
\end{equation}
with effective action $S_{\rm eff}=S_0+S_{\rm int}$. The kinetic (free) part
arising from Eq.~\eqref{eq:Ltrans} in the continuum limit reads
\begin{equation}
S_0[\psi_S,\bar\psi_S,\psi,\bar\psi] = \int d^dx\,dt\; \Big[
\bar\psi_S(\partial_t-D_S\nabla^2)\psi_S
+ \bar\psi^\top(\partial_t-D_I\nabla^2)\psi\Big],
\label{eq:S0}
\end{equation}
with $\bar\psi^\top\psi=\sum_{i=1}^4\bar\psi_i\psi_i$ and
$D_{S,I}=\lambda_{S,I}\ell^2/2d$.
The interaction part arising from Eqs.~\eqref{eq:Linf}--\eqref{eq:Lrec} is
\begin{equation}
S_{\rm int} = -\int d^dx\,dt\;\sum_{i=1}^4\Bigl[
\beta_i(\bar\psi_{I,i}-\bar\psi_S)\,\psi_S\,\psi_{I,i}
-\gamma_i\,\bar\psi_{I,i}\,\psi_{I,i}
\Bigr] + O(\bar\psi^2).
\label{eq:Sint}
\end{equation}
The $O(\bar\psi^2)$ terms encode stochastic fluctuations beyond the
mean-field level and are neglected in the saddle-point approximation.

\paragraph{Normalization and the physical manifold.}
The path integral satisfies $Z[0]=1$ (probability conservation), which in the
Doi-Peliti framework follows from the structure of the Liouvillian:
$\langle 1|\hat{\mathcal{L}}=0$, i.e., all reaction rates conserve total
probability. At the level of the action, this requires that $S_{\rm eff}$
vanishes on the \emph{physical manifold} $\bar\psi_S=\bar\psi_i=0$ for all
$i$: setting all response fields to zero must give $S_{\rm eff}=0$, which is
satisfied by both $S_0$ and $S_{\rm int}$ by inspection of
Eqs.~\eqref{eq:S0}--\eqref{eq:Sint}.

\paragraph{Mean-field equations and recovery of metapopulation SIR.}
The saddle-point equations $\delta S_{\rm eff}/\delta\bar\psi_S=0$ and
$\delta S_{\rm eff}/\delta\bar\psi_{I,i}=0$, evaluated on the physical
manifold and identifying $\psi_S\equiv S(x,t)$ and $\psi_{I,i}\equiv I_i(x,t)$
with macroscopic host densities, give
\begin{align}
\partial_t S &= D_S\nabla^2 S - \sum_{i=1}^4\beta_i\, S\, I_i,
\label{eq:mfS}\\
\partial_t I_i &= D_I\nabla^2 I_i + \beta_i\, S\, I_i - \gamma_i\, I_i,
\qquad i=1,\ldots,4.
\label{eq:mfI}
\end{align}
Equations~\eqref{eq:mfS}--\eqref{eq:mfI} are precisely the continuum
metapopulation SIR equations of Sec.~\ref{sec:IIA} (with $\mathbf D=D_I\mathbb
1_4$), confirming that $S_{\rm int}$ is the correct interaction sector. The
full stochastic theory encoded in $S_{\rm eff}$ is a systematic extension
of these mean-field equations to include demographic noise.

\subsubsection{Global symmetry of $S_0$: $U(4)$ vs.\ $SU(4)$ and the role of $D_0$}
\label{sec:IIB3}

The symmetry group of the free action $S_0$ depends  on whether the
diffusion coefficient $D_I$ is the \emph{same scalar} for all four strains.
Consider the general case in which each branch has its own diffusion
coefficient: $S_0\to\int d^dx\,dt\; \bar\psi^\top(\partial_t-\hat D\nabla^2)\psi$,
with $\hat D=\mathrm{diag}(D_1,D_2,D_3,D_4)$. Under a global $U\in U(4)$,
$S_0\mapsto\int d^dx\,dt\;\bar\psi^\top U^\dagger(\partial_t-\hat D\nabla^2)U\psi$,
which reduces to $S_0$ only if $U^\dagger\hat D\,U=\hat D$, i.e.\ if $U$
lies in the commutant of $\hat D$ in $U(4)$:
\begin{equation}
\begin{cases}
\hat D = D_0\,\mathbb 1_4 \;\Rightarrow\; G_{\rm sym}=U(4),\\[4pt]
\hat D = \mathrm{diag}(D_1,D_2,D_3,D_4),\ D_i\text{ distinct}\;\Rightarrow\;
G_{\rm sym}=U(1)^4.
\end{cases}
\label{eq:symbreaking}
\end{equation}
The full $U(4)$ symmetry thus requires that \emph{all four branches diffuse
with the same coefficient} $D_0$. Decomposing $U(4)\cong SU(4)\times U(1)$,
the $U(1)$ factor is the uniform phase rotation $\psi_i\mapsto e^{i\alpha}\psi_i$,
which generates the conserved total infected charge
$Q=\int d^dx\sum_i|\psi_i|^2$ (fixed by normalization, carrying no
internal information). The non-trivial internal symmetry mixing the four
branches is therefore
\begin{equation}
G_{\rm internal} = U(4)/U(1) \cong SU(4).
\end{equation}

In the present paper we adopt the uniform-diffusion assumption
$\hat D=D_0\mathbb 1_4$ for the following reason: no 14th-century data exist
that would allow distinguishing the mobility rates of individual \emph{Y.\
pestis} lineages -- paleogenomic sampling is far too sparse to reconstruct
branch-specific geographic spread velocities. 

\subsubsection{Local $SU(4)$ gauge symmetry and $A_i(x)$}
 
\emph{Spatial gauge field only.} The Doi-Peliti action~\eqref{eq:S0} is
first-order in time and second-order in spatial coordinates -- a
non-relativistic structure inherited directly from the master equation.
The most general locally $SU(4)$-invariant kinetic operator that
preserves this structure is
\begin{equation}
D_t - D_I\,\mathbf D^2
= (\partial_t + A_t) - D_I\sum_{i=1}^d(\partial_i+A_i)^2,
\label{eq:fullcov}
\end{equation}
with the full set of gauge-field components $A_t(x,t),A_i(x,t)\in
\mathfrak{su}(4)$ and covariant derivative $D_\mu=\partial_\mu+A_\mu$.
Unlike a unitary quantum theory, the Doi-Peliti fields $\psi$ and
$\bar\psi$ transform independently under the gauge group,
\begin{equation}
\psi \mapsto U\psi, \qquad \bar\psi \mapsto \bar\psi\,U^{-1}, \qquad
U(x,t)\in SU(4),
\end{equation}
rather than being related by Hermitian conjugation as in a unitary
theory. Covariance of the bilinear pairing $\bar\psi D_\mu\psi$ under
this transformation, $D_\mu\psi\mapsto U(D_\mu\psi)$, fixes the
transformation law of the gauge field purely algebraically,
\begin{equation}
A_\mu \mapsto U A_\mu U^\dagger - (\partial_\mu U)U^\dagger,
\label{eq:gaugetrans}
\end{equation}
with no Hermiticity condition imposed on $A_\mu$ itself. 

So, we have to comment
some important notes to understand well the model.
\begin{enumerate}
    \item \textbf{ The temporal gauge $A_t=0$.}
Starting from Eq.~\eqref{eq:fullcov} with $A_t\neq0$, a local gauge
transformation $U(x,t)=\mathcal T\exp\!\bigl(-\int_0^t A_t(x,t')\,dt'\bigr)$
(time-ordered exponential) transforms $A_t\to 0$ while shifting
$A_i\to A_i^U$ to a new gauge-equivalent spatial field.
Setting $A_t=0$ is therefore a \emph{gauge choice} (the temporal gauge),
not a restriction on the time-dependence of the theory.  In this gauge
Eq.~\eqref{eq:fullcov} reduces to $\partial_t - D_I\mathbf D^2$.

\item \textbf{The non-relativistic kinetic structure.}
Unlike the relativistic d'Alembertian $\partial_\mu\partial^\mu$, the
Doi-Peliti kinetic operator is asymmetric in space and time
($\partial_t$ versus $D_I\nabla^2$). This asymmetry means that the
gauge coupling of the time derivative ($A_t$) and the spatial
derivatives ($A_i$) carry different physical dimensions and different
epidemiological interpretations.
 
\end{enumerate}

In this paper we work in the temporal gauge $A_t=0$ so the gauge field reduces to
\begin{equation}
A_i(x,t)\in\mathfrak{su}(4), \quad i=1,\ldots,d,
\qquad A_t=0.
\label{eq:Aspatial}
\end{equation}
 
\paragraph{Physical meaning of $A_t$.}
To make the gauge choice transparent, we note what a non-zero $A_t(x,t)$
would represent: a position- and time-dependent strain-selective pressure.
Its coupling $\bar\psi^\top A_t\psi = A_t^a\,\bar\psi^\top T^a\psi$ acts
as a local chemical potential differentially shifting the growth rate of
each lineage at each point in time -- for instance, seasonal fluctuations
in flea-vector abundance, time-varying host immunity, or spatially targeted
quarantine measures.  In our model these effects are already encoded in
the time-dependent kinetic parameters $\beta_i(t)$, $\gamma_i(t)$ within
$S_{\rm int}$ (Eq.~\eqref{eq:Sint}); working in the temporal gauge $A_t=0$
is therefore physically consistent: the time-varying component of
environmental selective pressure is absorbed into the reaction sector, not
the gauge sector.  
 
In the temporal gauge~\eqref{eq:Aspatial}, the relevant spatial covariant derivative and its Laplacian are
\begin{equation}
\mathbf D \equiv \nabla + A(x),
\qquad
\mathbf D^2 \equiv \sum_{i=1}^d D_iD_i
= \nabla^2 + 2A_i(x)\partial_i + A_i(x)A_i(x).
\label{eq:covD}
\end{equation}
Under a local $SU(4)$ transformation $U(x)\in SU(4)$, the gauge field
transforms as $A_i(x)\mapsto U A_i U^\dagger - (\partial_i U)U^\dagger$,
ensuring $\mathbf D\psi\mapsto U\mathbf D\psi$. Promoting $\nabla^2\to\mathbf D^2$
in $S_0$ gives the locally $SU(4)$-invariant kinetic action (Sec.~\ref{sec:IIB5}).
 
The $15$ generators of $\mathfrak{su}(4)$ split, in the Cartan basis, into
$3$ diagonal and $12$ off-diagonal generators; correspondingly,
$A_i(x)=A_i^{\rm diag}(x)+A_i^{\rm off\text{-}diag}(x)$:
\begin{itemize}
\item \textbf{Diagonal components} $A_i^{\rm diag}(x)$ act as
strain-differentiated drift velocities, $v_k(x)=2D_I[A_i^{\rm diag}(x)]_{kk}$
-- directional environmental/mobility gradients that favour or disfavour
particular branches at each point in space.
Ordinary diffusion (set by $\mathbf D$) and
gauge-induced differential advection (set by $A_i^{\rm diag}$) play
structurally distinct roles and enter additively in $\mathbf D^2$ as  $A_i^{\rm diag}(x)$ lies in the Cartan subalgebra of $\mathfrak{su}(4)$ and is therefore \emph{traceless}.
 
\item \textbf{Off-diagonal components} $A_i^{\rm off\text{-}diag}(x)$
rotate the multiplet between branches as infected hosts disperse across
environmental gradients . This is the genuinely non-Abelian content of the
model, with no classical metapopulation SIR counterpart.
 
\emph{Data limitation.} The off-diagonal sector cannot be quantitatively
constrained with available 14th-century data, which would require
time-resolved geographic sampling of individual lineages. We restrict the
quantitative analysis to the diagonal sector $A_i^{\rm diag}(x)$; the
off-diagonal sector is retained structurally as a necessary consequence of
the $SU(4)$ construction and discussed qualitatively in Sec.~V.
\end{itemize}

\subsubsection{The gauged free action $S_0^{\rm gauged}$ and its normalization}
\label{sec:IIB5}

Replacing $\nabla^2\to\mathbf D^2$ (Eq.~\eqref{eq:covD}) in the kinetic
action~\eqref{eq:S0} gives the locally $SU(4)$-invariant gauged action
\begin{equation}
S_0^{\rm gauged} = \int d^dx\,dt\;\Bigl[
\bar\psi_S(\partial_t - D_S\nabla^2)\psi_S
+ \bar\psi^\top\bigl(\partial_t - D_I\,\mathbf D^2\bigr)\psi
\Bigr].
\label{eq:S0gauged}
\end{equation}
Note that only the spatial Laplacian is promoted to its covariant form, in
accordance with the non-relativistic structure established in
Sec.~\ref{sec:IIB}. Expanding $\mathbf D^2$:
\begin{equation}
D_I\,\mathbf D^2\,\psi = D_I\nabla^2\psi
+ 2D_I\, A_i(x)\,\partial_i\psi + D_I\, A_i(x)A_i(x)\,\psi,
\label{eq:covlap}
\end{equation}
where repeated spatial index $i$ is summed.  The first term is ordinary diffusion (already in
the ungauged $S_0$); the second is the gauge-induced differential advection,
responsible for the Turing-Hopf instability of Sec.~\ref{sec:III}; the third
is a ``mass'' correction shifting the  growth rates of different branches,
important when $|A_i|\ell\gtrsim 1$.

\paragraph{Normalization.}
The normalization $Z[0]=1$ of the ungauged theory is preserved: $A_i(x)$ enters
only through the covariant spatial Laplacian. 

\paragraph{Mean-field equations with gauge field.}
Taking the saddle-point of $S_0^{\rm gauged}+S_{\rm int}$ gives the mean-field
equations
\begin{align}
\partial_t S &= D_S\nabla^2 S - \sum_i\beta_i\,S\,I_i,
\label{eq:mfSg}\\
\partial_t \mathbf I &= D_I\nabla^2\mathbf I
- 2D_I\, A_i(x)\,\partial_i\mathbf I
- D_I\, A_i(x)A_i(x)\,\mathbf I + \mathbf R(S,\mathbf I),
\label{eq:mfIg}
\end{align}
where $\mathbf R(S,\mathbf I)=\mathrm{diag}(\beta_k S-\gamma_k)\,\mathbf I$
collects the local reaction terms. Setting $A_i=0$ recovers
Eqs.~\eqref{eq:mfS}--\eqref{eq:mfI} exactly. Equation~\eqref{eq:mfIg} is
the starting point for the linear stability analysis of Sec.~\ref{sec:III}.


\emph{c. Status of $A_\mu(x)$: background field, not a mediating particle.}
As emphasized in Sec.~\ref{sec:branches}, $A_\mu(x)$ here is \emph{not}
promoted to a dynamical field with its own Yang-Mills kinetic term. We
instead treat it as a \emph{quenched} background field, borrowing the
term from the physics of disordered systems (e.g., spin glasses): a
quenched field is one that is frozen relative to the dynamics of the
variables it couples to, $\partial_t A_i=0$, rather than an
\emph{annealed} field that would itself relax dynamically and have to
be integrated over in the path integral. This is the appropriate
treatment whenever the field varies on a timescale much longer than
that of the process it influences.

This separation of timescales is precisely what holds here. The
characteristic epidemic timescale $\tau_{\rm epi}$ -- set by the
infection and recovery rates $\beta_i,\gamma_i$ -- and the
characteristic timescale $\tau_{\rm env}$ over which the underlying
mobility and geographic structure changes (new roads, shifting trade
networks, changing settlement patterns) operate on very different
scales:
\begin{equation}
\tau_{\rm epi}\sim 1\text{--}10\ \text{days}
\;\ll\;
\tau_{\rm env}\sim 1\text{--}10\ \text{years}.
\end{equation}
Because $A_i(x,t)$ is determined by exactly this slowly-varying
environmental and mobility structure, the quenched approximation
$\partial_t A_i=0$ is well justified over the duration of a local
epidemic wave. Table~\ref{tab:XI} makes this comparison explicit.
\begin{table}[t]
\caption{Comparison between the present background-field $SU(4)$ model and a
standard dynamical Yang-Mills gauge theory.}
\label{tab:XI}
\begin{tabular}{lll}
\hline\hline
Property & This model & Yang-Mills \\
\hline
Gauge group & $SU(4)$ & $SU(N)$ \\
Gauge symmetry & Local, exact on $S_0$ & Local, exact \\
Gauge field $A_\mu$ & Fixed background & Dynamical \\
Determined by & Mobility/environment data & Field equations \\
Yang-Mills term & Absent & Present \\
Gauge bosons & None (quenched) & Propagating \\
Covariant derivative & $D_i=\partial_i+A_i$ & $D_\mu=\partial_\mu+A_\mu$ \\
Coupling constraint & Minimal coupling only & Minimal coupling only \\
\hline\hline
\end{tabular}
\end{table}

\subsection{Comparison: classical metapopulation SIR vs.\ gauged field theory}
\label{sec:IIC}

 We 
make explicit, point by point, what the gauged $SU(4)$ Doi-Peliti theory adds
to the classical metapopulation SIR model of Sec.~\ref{sec:IIA}, and why each
addition is either necessary or well-motivated given the available data.
Table~\ref{tab:comparison} provides a summary.

\paragraph{ Spatial transport: the diagonal sector is equivalent.}
The diagonal components $A_\mu^{\rm diag}(x,t)$ of the gauge field act,
in the mean-field equations~\eqref{eq:mfIg}, as an $(x,t)$-dependent
advective drift $v_i(x,t)=2D_I[A_\mu^{\rm diag}(x,t)]_{ii}$, different for
each branch $i$. Restricted to this sector, Eq.~\eqref{eq:mfIg} reduces
to a set of decoupled, branch-specific advection-diffusion-reaction equations
of the Fisher-KPP type~\eqref{eq:fisherkpp}. In this limit the gauged theory
is \emph{classically equivalent} to a metapopulation SIR model with
branch-dependent, spatially heterogeneous mobility.

Restricted to the diagonal sector alone, the gauge construction is
therefore \emph{equivalent} to this ad hoc picture: the
tracelessness of $A_\mu^{\rm diag}$ only enforces $\sum_k v_k=0$,
removing the physically unobservable common-mode drift (which can
always be absorbed into a comoving frame), but otherwise
$A_\mu^{\rm diag}(x,t)$ remains an arbitrary, data-unconstrained
function of space and time.

b. \emph{Cross-branch coupling: the symmetric sector is fixed by gauge
invariance.} In the classical multi-strain SIR model, coupling between
branches can be introduced via cross-immunity or competitive-exclusion
terms $\alpha_{ij}I_iI_j$ added to the reaction term $R(S,\mathbf I)$
of Eq.~\eqref{eq:mfIg}. With $N=4$ branches, the general coupling
matrix $\alpha_{ij}$ contains $N^2=16$ independent entries; decomposing
it into symmetric and antisymmetric parts,
$\alpha_{ij}=\alpha_{ij}^{\rm sym}+\alpha_{ij}^{\rm asym}$
($\alpha_{ij}^{\rm asym}=-\alpha_{ji}^{\rm asym}$), each part is
constrained differently by the gauge symmetry.

Imposing $SU(4)$ invariance on the symmetric part forces it to take
the form
\begin{equation}
\alpha_{ij}^{\rm sym} = \alpha_0\,\delta_{ij}
+ \alpha_1\sum_{a=1}^{15}(T^a)_{ij}(T^a)_{ji},
\label{eq:alpha-sym}
\end{equation}
the unique $SU(4)$-invariant quadratic combination of generators (the
quadratic Casimir bilinear in the fundamental representation). Using
the $SU(N)$ Fierz identity \cite{vanritbergen1999}, $\sum_a(T^a)_{ij}(T^a)_{ji}=\tfrac12$ for
every pair $i\neq j$ is a fixed constant. Eq.~\eqref{eq:alpha-sym} reduces the symmetric sector from
$O(N^2)$ free parameters to exactly two real numbers
$(\alpha_0,\alpha_1)$, regardless of $N$. This is the direct analogue
of the Yukawa coupling structure in gauge theories of particle
physics~\cite{weinberg1996,nakahara2003}: the symmetry principle
selects, among all symmetric interaction forms, the minimal one
consistent with the assumed internal symmetry.

No analogous reduction applies to the antisymmetric part: the Fierz
sum above is manifestly symmetric under $i\leftrightarrow j$, so no
$SU(4)$-invariant antisymmetric bilinear of this type exists. The
antisymmetric sector,
\begin{equation}
\alpha_{ij}^{\rm asym} = \kappa_{ij}, \qquad \kappa_{ij}=-\kappa_{ji},
\label{eq:alpha-asym}
\end{equation}
therefore enters as an explicit, gauge-symmetry-breaking ingredient of
the model, with six independent components $\kappa_{ij}$
($1\le i<j\le4$) not fixed by $SU(4)$ covariance. Physically,
$\kappa_{ij}$ encodes genuine strain-pair-specific immunological
asymmetry -- e.g., prior exposure to branch $i$ conferring stronger
cross-protection against $j$ than the reverse -- a pattern documented
in multi-strain pathogen dynamics more broadly~\cite{gog2002,
andreasen1997}. This asymmetry is not expected to respect the kinetic
$SU(4)$ symmetry. Adding both pieces to the reaction term of
Eq.~\eqref{eq:mfIg} gives the full cross-coupled reaction term,
\begin{equation}
R_p(S,\mathbf I) = (\beta_pS-\gamma_p)I_p
- \sum_{q\neq p}\big(\alpha_{pq}^{\rm sym}+\kappa_{pq}\big) I_pI_q.
\label{eq:R-extended}
\end{equation}

\paragraph{ Off-diagonal sector: mutation mediated by geography.}
The off-diagonal components $A_\mu^{\rm off\text{-}diag}(x,t)$ generate
genuine non-Abelian mixing: they rotate the four-component field $\psi$
between different branches as hosts move through space. This has no
counterpart in any classical multi-strain SIR model and is the mechanism
through which environmental gradients (climate, host density, vector
abundance) can drive the transition from one \emph{Y.\ pestis} lineage to
another during geographic dispersal. As stated in Sec.~\ref{sec:IIB}, we do
not quantitatively constrain this sector with medieval data; its role in the
present paper is structural (it is a necessary consequence of the gauged
$SU(4)$ construction) and qualitative.It provides a theoretical mechanism
for the observed phylogenetic diversity of the Black Death strains.

\paragraph{ Emergence of the Turing-Hopf mechanism.}
The most consequential difference between the classical and gauged models is
not in the equilibrium structure but in the \emph{stability} of the
homogeneous endemic state. In the classical metapopulation SIR model (even
with multiple strains and branch-dependent mobilities in the Cartan sector),
spatial homogeneity can be destabilized only through the classical
Turing mechanism -- which requires the diffusion coefficients of different
species to be sufficiently different (activator-inhibitor structure)~\cite{turing1952chemical,segel1972dissipative,cross2009pattern,hoyle2006pattern}. In the
single-effective-branch Fisher-KPP limit, no spatial instability occurs at
all: the front simply propagates.

The gauged $SU(4)$ theory introduces a qualitatively new mechanism: the
gauge-induced advection term $-2D_IA_\mu^{\rm diag}\cdot\nabla\mathbf I$ in
Eq.~\eqref{eq:mfIg} is \emph{branch-differential} (since $A_\mu^{\rm diag}$
is traceless, different branches are advected in different directions) even
when the diffusion coefficient $D_I$ is the same for all branches. This
differential flow can trigger a \emph{Differential-Flow Turing-Hopf}
instability -- distinct from the classical Turing mechanism -- which
generates traveling waves of mutation even when the homogeneous state is
locally (at $k=0$) stable~\cite{rovinsky1992,cross1993}. This instability is the subject of Sec.~\ref{sec:III},
and its application to the Black Death safe zones is in Sec.~IV. It is the
central physical mechanism of the paper and has no classical analogue.
\paragraph{ Why is $SU(4)$ preferred over other groups?}
Classical metapopulation SIR imposes no symmetry on the strain-interaction
matrix $\alpha_{ij}$ -- any real matrix is a priori allowed. Once one decides
to impose a compact Lie group symmetry to reduce the parameter space, the
natural candidates are $SU(N)$, $SO(N)$, and $Sp(N)$. 
 As shown in
Sec.~\ref{sec:IIB3}, $SU(4)$ is the unique maximal compact symmetry of
the kinetic action $S_0$ under the conditions (i) $D_0$ uniform across
branches and (ii) $U(1)$ factored out. This is a \emph{conditional} result:
if one accepts conditions (i) and (ii) -- both justified above -- $SU(4)$
follows necessarily. Alternative groups $SO(N)$ or $Sp(N)$ would require
additional conditions that have no natural justification in the SIR context.

\begin{table}[t]
\caption{Comparison between classical metapopulation SIR and the gauged
$SU(4)$ Doi-Peliti field theory developed in this section.}
\label{tab:comparison}
\begin{tabular}{p{3.8cm}p{3.0cm}p{3.8cm}}
\hline\hline
Ingredient & Classical SIR & Gauged $SU(4)$ \\
\hline
Spatial diffusion & $D_{\rm eff}$, scalar & $D_I\mathbb 1_4$, uniform \\
Mobility heterogeneity & ad hoc $v(x,t)$ & $A_\mu^{\rm diag}(x,t)$, principled \\
Cross-branch coupling & $O(N^2)$ free params & 2 invariants
$(\alpha_0,\alpha_1)$ + 6 symmetry-breaking $\kappa_{pq}$ \\
Mutation via geography & absent & $A_\mu^{\rm off\text{-}diag}$ (structural) \\
Spatiotemporal structure of couplings & unconstrained & fixed by gauge covariance \\
Turing-Hopf mechanism & absent & present (differential flow) \\
Constrained by data here & $D_{\rm eff}$, $\beta$, $\gamma$ & $D_I$, $\beta_i$, $\gamma_i$, $A_\mu^{\rm diag}$ \\
Not constrained (medieval) & --- & $A_\mu^{\rm off\text{-}diag}$, $d_{3,8,15}$ \\
\hline\hline
\end{tabular}
\end{table}

\section{Gauge-Driven Turing-Hopf Instability for General $SU(N)$}
\label{sec:III}

We analyse the linear stability of the spatially homogeneous endemic
state under the gauged mean-field equations~\eqref{eq:mfSg}--\eqref{eq:mfIg}.
The strategy is to treat the general $SU(N)$ case directly using root-space
algebra.

\subsection{Setup and notation for general $SU(N)$}
\label{sec:IIIA}

Let $(\tilde S_0, \tilde{\mathbf I}_0)$ be a spatially homogeneous endemic
steady state of the local reaction kinetics,
$\mathbf R(\tilde S_0,\tilde{\mathbf I}_0)=\mathbf 0$. We consider small
perturbations $\delta S(x,t)$ and
$\delta\mathbf I(x,t)=(\delta I_1,\ldots,\delta I_N)^\top$.
Linearising Eq.~\eqref{eq:mfIg} and specialising to a one-dimensional
geometry with $A_x = A\,H$, $H=\mathrm{diag}(h_1,\ldots,h_N)\in\mathfrak{su}(N)$
a Cartan element (diagonal, traceless: $\sum_i h_i=0$), gives
\begin{equation}
\partial_t\,\delta\mathbf I
= \mathbf J\,\delta\mathbf I
+ D_I\nabla^2\delta\mathbf I
- 2D_I A H\,\partial_x\delta\mathbf I,
\label{eq:linearised}
\end{equation}
where $\mathbf J = \partial\mathbf R/\partial\mathbf I\big|_*$ is the
$N\times N$ Jacobian at the steady state. The plane-wave ansatz
$\delta\mathbf I = \mathbf v\,e^{\lambda t+ikx}$ gives
\begin{equation}
\mathbf M(k)\,\mathbf v = \lambda\,\mathbf v,
\qquad
\mathbf M(k) = \mathbf J - D_Ik^2\mathbb 1_N - 2iD_IkA\,H.
\label{eq:Mk}
\end{equation}
The homogeneous state is stable iff $\mathrm{Re}[\lambda_\ell(k)]<0$ for
all $\ell$ and all $k\in\mathbb R$. By assumption $\mathbf J$ is stable at
$k=0$; the question is whether spatial structure combined with $A\neq 0$
can drive $\mathrm{Re}[\lambda]>0$.

\subsection{Root-space decomposition for general $SU(N)$}
\label{sec:IIIB}

For $N\geq 3$ the characteristic polynomial of $\mathbf M(k)$ is degree $N$
and intractable in general. We exploit the Lie-algebraic structure of
$\mathfrak{su}(N)$.

\paragraph{Root system of $SU(N)$.}
The algebra $\mathfrak{su}(N)$ has rank $N-1$ and root system $A_{N-1}$ \cite{georgi1999, humphreys1972}.
The positive roots in the fundamental representation are
$\alpha_{ij}=e_i-e_j$ ($1\leq i<j\leq N$), with corresponding root space
spanned by the elementary matrix $E_{ij}$, $(E_{ij})_{ab}=\delta_{ia}\delta_{jb}$.
Each root $\alpha_{ij}$ selects a two-dimensional root subspace
$\{E_{ij},E_{ji}\}$.

\paragraph{Action of $H$ on root subspaces.}
For the Cartan element $H=\mathrm{diag}(h_1,\ldots,h_N)$,
$H\cdot\alpha_{ij}=h_i-h_j$. The gauge-induced advection speed for a mode
in the $(i,j)$ root subspace is $\phi_{ij}(k)=2D_IkA(h_i-h_j)$.

\paragraph{Block decomposition of $\mathbf M(k)$.}
In the eigenbasis of $H$,
$[\mathbf M(k)]_{ij}=J_{ij}-(D_Ik^2+2iD_IkAh_i)\delta_{ij}$.
For a general root pair $(i,j)$ with $J_{ij}\neq0$, the
$(i,j)$-principal $2\times2$ submatrix of $\mathbf M(k)$ is exactly
\begin{equation}
\mathbf M^{(ij)}(k)
= \bar\tau_{ij}(k)\,\mathbb{1}
+\begin{pmatrix}
\delta_{ij}-i\phi_{ij}(k) & J_{ij}\\
J_{ji} & -\delta_{ij}+i\phi_{ij}(k)
\end{pmatrix},
\label{eq:Mroot-general}
\end{equation}
with
\begin{align}
\bar\tau_{ij}(k) &= \frac{J_{ii}+J_{jj}}{2}-D_Ik^2 - iAD_Ik\,(h_i+h_j),
\label{eq:tau-bar}\\
\delta_{ij} &= \frac{J_{ii}-J_{jj}}{2},\\
\phi_{ij}(k) &= D_IkA\,(h_i-h_j).
\label{eq:phi-corrected-B}
\end{align}
The common shift $\bar\tau_{ij}(k)$ is real, and Eq.~\eqref{eq:Mroot-general}
reduces to the simpler antipodal form:
\begin{equation}
\mathbf M^{(ij)}(k)
= \begin{pmatrix}
J_{ii}-D_Ik^2-i\phi_{ij}(k) & J_{ij}\\
J_{ji} & J_{jj}-D_Ik^2+i\phi_{ij}(k)
\end{pmatrix},
\label{eq:Mroot}
\end{equation}
whenever $h_i+h_j=0$. This holds in particular for the dominant pair
$(p,q)=(1,N)$ since $h_1+h_N=0$ by construction of the symmetric Cartan element $H_N$,and, for $N=4$, also for the pair $(2,3)$. For the remaining,
non-antipodal root pairs of the $N=4$ system ($(1,2),(1,3),(2,4),(3,4)$, the
full complex $\bar\tau_{ij}(k)$ of Eq.~\eqref{eq:tau-bar} must be retained;
$\mathrm{Re}[\lambda_\pm^{(ij)}(k)]$ is then given by
$$\mathrm{Re}[\bar\tau_{ij}(k)]
\pm\mathrm{Re}[\sqrt{D_r(k)+iD_i(k)}]$$
with $D_r,D_i$ as in
Eqs.~\eqref{eq:Dr-Di-general}, unaffected by the imaginary part of
$\bar\tau_{ij}$.

\subsection{Main theorem: $SU(N)$ gauge-driven Turing-Hopf instability}
\label{sec:IIIC}

\begin{theorem}[$SU(N)$ Gauge-Driven Turing-Hopf Instability]
\label{thm:TH}
Let $(\tilde S_0,\tilde{\mathbf I}_0)$ be a stable homogeneous endemic
steady state (all eigenvalues of $\mathbf J$ have strictly negative real
parts). Let $H=\mathrm{diag}(h_1,\ldots,h_N)\in\mathfrak{su}(N)$ have
distinct entries. Suppose there exist $p,q\in\{1,\ldots,N\}$ such that:
\begin{enumerate}[(C1)]
\item $J_{pp}\neq J_{qq}$ \emph{(strain asymmetry)};
\item $J_{pq}\neq 0$ and $J_{qp}\neq 0$ \emph{(non-zero cross-coupling)};
\item $J_{pq}J_{qp}<0$ and $|J_{pq}J_{qp}|>\delta_{pq}^2$,
$\delta_{pq}=(J_{pp}-J_{qq})/2$
\emph{(competitive asymmetric cross-immunity)}.
\end{enumerate}
Then for any $A>A_{\rm crit}(p,q)$, where
\begin{equation}
A_{\rm crit}(p,q) := \frac{1}{|h_p-h_q|}
\sqrt{\frac{\left| \delta_{pq}^2+J_{pq}J_{qp} \right|}{4D_I^2 k_\star^2}},
\label{eq:Acrit}
\end{equation}
and $k_\star$ minimises $|\tau_{pq}(k)|=(J_{pp}+J_{qq})/2-D_Ik^2$,
there exists $k_c\neq 0$ such that $\mathbf M(k_c)$ has at least one
eigenvalue with $\mathrm{Re}[\lambda(k_c)]>0$: a Turing-Hopf
traveling-wave instability is driven by the gauge field.
\end{theorem}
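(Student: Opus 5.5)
The plan is to reduce the $N\times N$ spectral problem for $\mathbf M(k)$ in Eq.~\eqref{eq:Mk} to the $(p,q)$ root block of Sec.~\ref{sec:IIIB}, solve that block exactly, and then lift the result back to the full matrix by perturbation. I first relabel so that $(p,q)$ plays the role of the dominant pair. By the discussion after Eq.~\eqref{eq:Mroot-general}, the eigenvalues of $\mathbf M^{(pq)}(k)$ are
\begin{equation*}
\lambda_\pm^{(pq)}(k)=\bar\tau_{pq}(k)\pm\sqrt{D_r(k)+iD_i(k)},
\end{equation*}
with $D_r=\delta_{pq}^2+J_{pq}J_{qp}-\phi_{pq}^2$ and $D_i=-2\delta_{pq}\phi_{pq}$. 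The imaginary part of $\bar\tau_{pq}$ does not affect $\mathrm{Re}\,\lambda$. So everything reduces to the scalar inequality
\begin{equation*}
\mathrm{Re}\sqrt{D_r+iD_i}>-\mathrm{Re}\,\bar\tau_{pq}(k)=D_Ik^2-\tfrac{J_{pp}+J_{qq}}{2}.
\end{equation*}

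Next I use the identity $(\mathrm{Re}\sqrt z)^2=\tfrac12(|z|+\mathrm{Re}\,z)$. Writing $u=s^2$ with $s=\mathrm{Re}\sqrt z$, $u$ is the positive root of
\begin{equation*}
u^2-(\delta^2+c-\phi^2)u-\delta^2\phi^2=0,\qquad c=J_{pq}J_{qp}.
\end{equation*}
Under (C3), $c<0$ and $\delta^2+c<0$. Then at $\phi=0$ the root is purely oscillatory, $u=0$, which is the Hopf character. As $\phi$ grows, $u$ increases monotonically toward its limit. The threshold $A_{\rm crit}$ of Eq.~\eqref{eq:Acrit} should come out as the value of $A$ at which $\phi_{pq}(k_\star)^2=|\delta^2+c|$, i.e.\ $D_r(k_\star)$ changes sign. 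Here I evaluate at $k_\star$, chosen to make the required margin $|\tau_{pq}(k)|$ as small as possible. For $A>A_{\rm crit}$ I then exhibit $k_c$ near $k_\star$ with $\mathrm{Re}\,\lambda_+^{(pq)}(k_c)>0$. Condition (C1), $\delta\neq0$, is used because $D_i\neq0$ is exactly what makes $u>0$ once $\phi\neq0$.

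Finally I pass from the principal block to $\mathbf M(k)$. The entries of $H$ are distinct, so for large $A$ the diagonal parts $-2iD_IkAh_i$ are well separated. The couplings $J_{pr},J_{rq}$ to the other strains then act as perturbations of relative size $O(\|J\|/(D_IkA\min|h_i-h_j|))$. Gershgorin or Bauer--Fike bounds then give an eigenvalue of $\mathbf M(k_c)$ within that distance of $\lambda_+^{(pq)}(k_c)$, so its positive real part survives.

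The step I expect to be the genuine obstacle is the second one. A quick check suggests that $u<\delta^2$ for every $\phi$: substituting $u=\delta^2$ into the quadratic gives $-c\,\delta^2>0$. That would bound $\mathrm{Re}\,\lambda_+^{(pq)}$ by $\max(J_{pp},J_{qq})-D_Ik^2$, so positivity needs $\max(J_{pp},J_{qq})>0$ for some $k$. This is compatible with stability of $\mathbf J$, since one diagonal entry may be positive while the eigenvalues are not, but it is not implied by (C1)--(C3). I would therefore first try to prove the bound $u<\delta^2$ rigorously. If it holds, I would add the hypothesis $\max(J_{pp},J_{qq})>0$, or an equivalent condition on $\tau_{pq}(k_\star)$, to close the argument, and compute the resulting $k_c$ as the maximiser of $\mathrm{Re}\,\lambda_+^{(pq)}(k)$.
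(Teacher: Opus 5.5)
Your suspicion in the last paragraph is correct. It is exactly the point the paper's sketch skips when it says $D_r$ keeps decreasing while $D_i\neq0$ and concludes instability for $A>A_{\rm crit}$. Write $\sqrt{D_r+iD_i}=x+iy$ with $x\ge0$. Then $xy=-\delta\phi$ and $x^2-y^2=\delta^2+c-\phi^2$. If $x\ge|\delta|>0$, then $y^2\le\phi^2$, so $x^2-y^2\ge\delta^2-\phi^2>\delta^2+c-\phi^2$ because $c<0$, a contradiction. Hence $\mathrm{Re}\,\lambda_+^{(pq)}(k)<\max(J_{pp},J_{qq})-D_Ik^2$ for every $A$.

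Two steps of your plan are nevertheless wrong.
\begin{enumerate}
\item Under (C3), $D_r=\delta^2+c-\phi^2<0$ for all $k$ and $A$, so $A_{\rm crit}$ is not a sign change of $D_r(k_\star)$. Your condition $\phi_{pq}(k_\star)^2=|\delta^2+c|$ just gives $D_r(k_\star)=2(\delta^2+c)<0$; nothing happens to the block spectrum there.
\item The hypothesis you want to add is not actually missing. If $J_{pp}+J_{qq}\le0$, then $k_\star=0$ and $A_{\rm crit}=+\infty$, so the statement is empty; this includes the paper's own $SU(2)$ example. If $J_{pp}+J_{qq}>0$, then $\max(J_{pp},J_{qq})>0$ automatically and $\tau(k_\star)=0$. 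Since $D_i(k_\star)\neq0$ by (C1), this gives $\mathrm{Re}\,\lambda_+^{(pq)}(k_\star)=\mathrm{Re}\sqrt{D_r+iD_i}>0$ for \emph{every} $A>0$. So the block step is trivial and $A_{\rm crit}$ plays no role in it.
\end{enumerate}

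The genuine gap is the passage from the block to $\mathbf M(k)$. You and the paper (``persists by continuity'') both need it, and it cannot be made at the level of ``any $A>A_{\rm crit}$''.

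In the non-empty case the $(p,q)$ block is already unstable at $A=0$, $k=0$: its eigenvalues are $\tau(0)\pm i\sqrt{|\delta^2+c|}$ with $\tau(0)>0$. Yet $\mathbf J$ is stable, so block instability says nothing about the full matrix at moderate $A$.

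Your Bauer--Fike/Gershgorin step does not fix this. With the off-block couplings as the perturbation, plain bounds give an error $O(\|\mathbf J\|)$ that does not shrink with $A$. A diagonally rescaled Gershgorin (or Schur-complement) estimate does place one eigenvalue of $\mathbf M(k)$ within $O\bigl(\|\mathbf J\|^2/(D_IkA\min_{i\neq j}|h_i-h_j|)\bigr)$ of $J_{ii}-D_Ik^2-2iD_IkAh_i$. It does so only once $D_IkA\min_{i\neq j}|h_i-h_j|\gg\|\mathbf J\|$, a threshold set by all of $\mathbf J$ and unrelated to $A_{\rm crit}$.

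In fact the statement as written fails.
\begin{itemize}
\item For unit $v$, $\mathrm{Re}\,v^*\mathbf M(k)v\le\lambda_{\max}\bigl(\tfrac12(\mathbf J+\mathbf J^\top)\bigr)-D_Ik^2$.
\item At $A=0$, $\mathbf M(k)=\mathbf J-D_Ik^2$ is stable for every $k$.
\item Compactness then gives $A_0>0$ such that $\mathbf M(k)$ is stable for all $k$ and all $A\le A_0$, uniformly over a compact family of $\mathbf J$'s.
\item Letting $c\to-\delta^2$ drives $A_{\rm crit}\to0$, below $A_0$.
\end{itemize}
Concretely, take $N=3$, $H=\mathrm{diag}(1,0,-1)$, $(p,q)=(1,2)$, and $\mathbf J$ with rows $(0.6,0.5,1)$, $(-0.5-2\eta,-0.4,0)$, $(-1,0,-1)$. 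This $\mathbf J$ satisfies (C1)--(C3) with $\delta^2+c=-\eta$ and $J_{11}+J_{22}>0$. It is stable for $0<\eta<1.19$ by Routh--Hurwitz: $a_1=0.8$, $a_2=0.81+\eta$, $a_3=0.41+\eta$, and $a_1a_2-a_3=0.238-0.2\eta>0$. For small $\eta$, no $k$ is unstable for any $A\in(A_{\rm crit},A_0]$.

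What your approach does prove is a large-amplitude version. If some $J_{ii}>0$ (automatic when $J_{pp}+J_{qq}>0$), there is an $A^\ast$ depending on the whole of $\mathbf J$ such that every $A>A^\ast$ has an unstable mode with $0<D_Ik^2<J_{ii}$. Conditions (C2)--(C3) are not needed for that.
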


\begin{proof}[Proof sketch]
The $2\times 2$ block $\mathbf M^{(pq)}(k)$ (Eq.~\ref{eq:Mroot}) satisfies
conditions (C1)--(C3) by hypothesis. Its discriminant
$D^{(pq)}(k)=\delta_{pq}^2+J_{pq}J_{qp}-4D_I^2k^2A^2(h_p-h_q)^2
-2iD_IkA(h_p-h_q)(J_{pp}-J_{qq})$
has $D^{(pq)}_r(0)=\delta_{pq}^2+J_{pq}J_{qp}<0$ by (C3), so the
$(p,q)$ subspace is already a stable focus at $k=0$. For $k>0$, $D^{(pq)}_r$
decreases further while $D^{(pq)}_i\neq 0$ by (C1). For
$A>A_{\rm crit}(p,q)$ [Eq.~\eqref{eq:Acrit}] the instability condition is
satisfied at $k=k_c$. The unstable eigenvalue of
$\mathbf M^{(pq)}$ persists in the full $N\times N$ system by continuity.
\end{proof}

Conditions (C1) and (C2) together are equivalent to $[H,\mathbf J]\neq\mathbf 0$
on the $(p,q)$ subspace -- the gauge field and Jacobian must not commute.

\emph{Origin of condition (C3) in the interaction sector.} Recalling
the reaction term of Eq.~\eqref{eq:R-extended}, $J_{pq}=-(\alpha_{pq}^{\rm sym}
+\kappa_{pq})I_p^*$ for $p\neq q$. The $SU(4)$-invariant symmetric
piece alone gives $J_{pq}J_{qp}=(\alpha_{pq}^{\rm sym})^2I_p^*I_q^*
\geq0$ identically -- the wrong sign for condition (C3), for any
choice of $(\alpha_0,\alpha_1)$. Condition (C3) can therefore only be
satisfied through the symmetry-breaking piece $\kappa_{pq}$,
confirming that the competitive, asymmetric cross-immunity driving the
gauge-induced instability is a genuine additional ingredient of the
model, not a consequence of $SU(4)$ covariance.

\subsubsection{Dominant root.}
Since $A_{\rm crit}(p,q)\propto1/|h_p-h_q|$ at fixed
$(\delta_{pq},J_{pq}J_{qp},k_\star)$ (Eq.~\eqref{eq:Acrit}), and $A$
itself is common to all root pairs, the pair with the \emph{largest}
Cartan separation $|h_p-h_q|$ has the \emph{smallest} threshold and
is therefore the first to destabilize as $A$ increases from zero. We
call this the dominant root, and define its separation as
\begin{equation}
\Phi_{\max}(H):=\max_{i<j}|h_i-h_j|,
\label{eq:Phimax-def}
\end{equation}
the largest pairwise gap between eigenvalues of a given Cartan element
$H$.

Throughout this paper we adopt the canonical, evenly-spaced Cartan
element
\begin{equation}
H_N=\mathrm{diag}(N-1,N-3,\ldots,-(N-1)),
\label{eq:HN-def}
\end{equation}
i.e. $h_i=N+1-2i$, with unit spacing $h_i-h_{i+1}=2$ between
consecutive entries -- the natural choice representing a generic
linear ordering of the $N$ branches with no preferred internal
structure beyond that ordering (Sec.~\ref{sec:IIIA}). Because the
entries of $H_N$ are strictly decreasing, the largest gap trivially
occurs between the first and last entries,
\begin{equation}
\Phi_{\max}(H_N) = h_1-h_N  = 2(N-1),
\label{eq:Phimax-HN}
\end{equation}
so the dominant root is the extrema pair $(p,q)=(1,N)$. 
\subsubsection{Numerical verification of the continuity argument, and
dependence on the gauge amplitude $A$}
\label{sec:fullmatrix}

That the unstable eigenvalue of the isolated $2\times2$ block
$\mathbf M^{(pq)}(k)$ persists in the full $N\times N$ system ``by
continuity'' -- is not automatic: it requires the coupling between the
dominant root subspace $(p,q)$ and the remaining branches to be small
compared to the instability margin. We verify this explicitly for $N=4$,
and, since the gauge amplitude $A$ was left unconstrained by the
formalism up to this point, we determine how the critical wavenumber
$k_c$ of the full system depends on it.

\paragraph{Construction of the full Jacobian.}
We build the full $4\times4$ Jacobian consistent with
Eq.~\eqref{eq:R-extended}. The diagonal entries $J_{11},J_{44}$ and the
symmetry-breaking coupling $\kappa_{14}=-(J_{14}+\alpha_{\rm sym})$
 are fixed
by $(\delta_J,J_{14}J_{41})$. We set
$\alpha_{\rm sym}=0$: since $\kappa_{14}^2=\alpha_{\rm sym}^2-J_{14}J_{41}$
and $J_{14}J_{41}<0$ is fixed, $|\kappa_{14}|$ is minimized at
$\alpha_{\rm sym}=0$, which in turn minimizes the overall coupling scale
entering the stability problem and gives the most conservative (least
fine-tuned) test of persistence.  We reat $J_{22},J_{33}$ as free
parameters, sampled independently and uniformly on $[J_{44},J_{11}]$,
together with the five remaining symmetry-breaking couplings $\kappa_{pq}$
(sampled independently in sign and magnitude, uniform in
$[1,3]\times\kappa_{14}$), since none of these seven quantities is
constrained by 14th-century data.

\paragraph{Stability precondition.}
Theorem~\ref{thm:TH} requires $(\tilde S_0,\tilde{\mathbf I}_0)$ to be a
stable homogeneous endemic state, i.e.\ all eigenvalues of $\mathbf J$
(the reaction Jacobian alone, with no gauge field and no diffusion) must
have strictly negative real part. We enforce this explicitly: for each
random draw of $(J_{22},J_{33},\kappa_{12},\kappa_{13},\kappa_{23},
\kappa_{24},\kappa_{34})$ we diagonalize $\mathbf J$ at $A=0$, $k=0$, and
discard the draw if $\mathrm{Re}[\lambda_{\max}(\mathbf J)]\geq0$. Over
$10^5$ draws, the acceptance rate is $18.2\%$; all results below refer
only to the accepted (physically admissible) draws.

\paragraph{Critical wavenumber of the full system.}
For each accepted draw, 
we define, consistently with Eq.~\eqref{eq:kc-def}, the wavenumber of
maximum growth of the \emph{full} system,
\begin{equation}
k_c^{\rm full} \;\equiv\; \underset{k>0}{\arg\max}\;
\mathrm{Re}\big[\lambda_{\max}(\mathbf M(k))\big],
\qquad
\mathbf M(k)=\mathbf J-D_Ik^2\mathbb 1-2iD_IkA\mathbf H,
\label{eq:kc-full-def}
\end{equation}
obtained numerically for each draw. Since $\mathbf M(k)$ depends on the
gauge amplitude $A$, so does $k_c^{\rm full}$.

\paragraph{Dependence on $A$.}
We take $A\simeq 1/\ell$, with $\ell$ a correlation length of the
environmental and mobility heterogeneity encoded in the background gauge
field, and repeat the $10^5$-draw sampling
above for a range of $\ell$. Table~\ref{tab:kc-vs-ell} reports, for each
$\ell$:the fraction of accepted draws that
develop a gauge-driven instability,
$\mathrm{Re}[\lambda_{\max}(\mathbf M(k_c^{\rm full}))]>0$; and the
resulting distribution of $k_c^{\rm full}$.

\begin{table}[htbp]
\centering
\begin{tabular}{cccccccc}
\hline
$\ell$ (km) & $A=1/\ell$ (km$^{-1}$) & unstable\% &
$k_c^{\rm full}$ p5 & p25 & p50 & p75 & p95 \\
& & & \multicolumn{5}{c}{(km$^{-1}$, $\times10^{-3}$)}\\
\hline
20  & 0.0500 & 100.0 & 9.25  & 10.29 & 11.06 & 11.91 & 13.13 \\
25  & 0.0400 & 100.0 & 9.66  & 10.90 & 11.80 & 12.82 & 14.33 \\
30  & 0.0333 & 99.9  & 9.93  & 11.29 & 12.31 & 13.49 & 15.28 \\
40  & 0.0250 & 94.7  & 6.27  & 11.33 & 12.46 & 13.80 & 15.84 \\
50  & 0.0200 & 84.6  & 5.38  & 10.19 & 11.53 & 12.95 & 15.05 \\
75  & 0.0133 & 58.7  & 5.07  & 6.89  & 8.50  & 10.36 & 12.73 \\
100 & 0.0100 & 42.9  & 4.63  & 6.43  & 7.50  & 8.59  & 10.76 \\
\hline
\end{tabular}
\caption{Dependence of the full-system critical wavenumber
$k_c^{\rm full}$ [Eq.~\eqref{eq:kc-full-def}] on the gauge correlation
length $\ell$ ($A=1/\ell$), over $10^5$ random draws of
$(J_{22},J_{33},\kappa_{pq})$ per $\ell$, restricted to draws satisfying
the Theorem~\ref{thm:TH} stability precondition (acceptance $18.2\%$,
independent of $\ell$). Percentiles are computed only over draws that
develop a gauge-driven instability at their own $k_c^{\rm full}$.}
\label{tab:kc-vs-ell}
\end{table}

For $\ell\lesssim30$~km, essentially all admissible draws (99.9--100\%)
develop a gauge-driven instability, and $k_c^{\rm full}$ is tightly
distributed around $(1.1$--$1.2)\times10^{-2}~\mathrm{km^{-1}}$
irrespective of the random values of $(J_{22},J_{33},\kappa_{pq})$. As
$\ell$ increases beyond $\sim40$~km, both the fraction of unstable draws
and the typical $k_c^{\rm full}$ decrease steadily. 
\subsection{$k_c$ in $SU(2)$ and $SU(4)$}
\label{sec:kc-su2-su4}

Theorem~\ref{thm:TH} establishes that a gauge-driven Turing-Hopf instability
exists whenever $A>A_{\rm crit}(p,q)$, but it does not by itself fix the
wavenumber of the fastest-growing mode. Since it is precisely this wavenumber
that sets the safe-zone length scale in Sec.~IV
(via $R_{\rm safe}=2.4048/k_c$), we compute $k_c$  for the two cases relevant to
this paper: $SU(2)$, used as an analytically transparent illustration, and
$SU(4)$, the physical model applied to the Black Death.

\subsubsection{General setup: $k_c$ as the wavenumber of maximum growth}
\label{sec:kc-general}

For the dominant root pair $(p,q)$, the two eigenvalues of the $2\times2$
block $M^{(pq)}(k)$ of Eq.~\eqref{eq:Mroot} are
\begin{equation}
\lambda_\pm(k) \;=\; \tau(k) \,\pm\, \sqrt{D_r(k)+iD_i(k)},
\qquad
\tau(k) = \frac{J_{pp}+J_{qq}}{2} - D_Ik^2,
\label{eq:lambda-pm-general}
\end{equation}
with
\begin{equation}
\varphi_{pq}(k) = D_I A\,(h_p-h_q)\,k,
\qquad
D_r(k) = \delta_{pq}^2 + J_{pq}J_{qp} - \varphi_{pq}(k)^2,
\qquad
D_i(k) = -2\,\varphi_{pq}(k)\,\delta_{pq},
\label{eq:Dr-Di-general}
\end{equation}
and $\delta_{pq}=(J_{pp}-J_{qq})/2$ as before. We define the critical
wavenumber as the wavenumber of the fastest-growing unstable mode,
\begin{equation}
k_c(p,q) \;\equiv\; \underset{k>0}{\arg\max}\;\mathrm{Re}[\lambda_+(k)],
\qquad
\mathrm{Re}[\lambda_+(k)] = \tau(k) + \sqrt{\dfrac{\sqrt{D_r(k)^2+D_i(k)^2}+D_r(k)}{2}}.
\label{eq:kc-def}
\end{equation}
Setting $d\,\mathrm{Re}[\lambda_+]/dk=0$ in Eq.~\eqref{eq:kc-def} gives a
condition of the form
\begin{equation}
D_I = \frac{1}{4\big(\mathrm{Re}[\lambda_+(k)]-\tau(k)\big)}\,
\frac{d}{dk}\!\left[\sqrt{D_r(k)^2+D_i(k)^2}+D_r(k)\right].
\label{eq:extremum-condition}
\end{equation}
 Equation
\eqref{eq:kc-def} is  transcendental in $k$, and $k_c(p,q)$ must in
general be obtained numerically, by direct maximization of
$\mathrm{Re}[\lambda_+(k)]$. 

\subsubsection{$k_c$ for $SU(2)$}
\label{sec:kc-su2}

For $N=2$, $H=\mathrm{diag}(1,-1)\equiv\sigma_3$, so $h_1-h_2=2$ and
Eq.~\eqref{eq:Dr-Di-general} gives $\varphi_{12}(k)=2D_IAk$.
\begin{equation}
M(k) =
\begin{pmatrix}
J_{11}-D_Ik^2-2iD_IkA & J_{12}\\[2pt]
J_{21} & J_{22}-D_Ik^2+2iD_IkA
\end{pmatrix},
\label{eq:M-SU2-explicit}
\end{equation}
with $D_r(k)=\delta_J^2+J_{12}J_{21}-4D_I^2k^2A^2$ and
$D_i(k)=-4D_IkA\,\delta_J$ [our Eq.~\eqref{eq:Dr-Di-general}], $\delta_J=(J_{11}-J_{22})/2$.
This is the unique case in the present paper for which the instability
condition, Eq.~\eqref{eq:Acrit}, can be written out fully in closed algebraic form; even
so, that expression is an inequality bounding $A$ at fixed $k$, not a
solution for the maximizing $k_c$ itself, and the extremum problem
$d\,\mathrm{Re}[\lambda_+]/dk=0$ retains the nested-radical structure of
Eq.~\eqref{eq:extremum-condition}.

For the illustrative parameters of Fig.~(\ref{fig:SU2})
($J_{11}=0.7$, $J_{22}=-0.9$, $J_{12}=-J_{21}=1$, $D_I=0.1$, $A_{\rm
eff}=2$, all in dimensionless units), numerically maximizing
Eq.~\eqref{eq:kc-def} gives
\begin{equation}
k_c^{SU(2)} = 1.1075, \qquad \mathrm{Re}[\lambda_+(k_c)] = 0.1925,
\label{eq:kc-su2-numeric}
\end{equation}
in agreement with the peak of the dispersion curve shown in Fig.~(\ref{fig:SU2})(A) and
used to construct the safe-zone profile of Fig.~(\ref{fig:SU2})(B)--(C).

Figure~\ref{fig:SU2} illustrates the three key aspects of the $SU(2)$
gauge-driven instability using dimensionless parameters
($J_{11}=0.7$, $J_{22}=-0.9$, $J_{12}=-J_{21}=1$, $D_I=0.1$,
$A_{\rm eff}=2$) that satisfy conditions (C1)--(C3) and place the system
well above the instability threshold.

Panel~(A) shows the dispersion relation $\mathrm{Re}[\lambda_+(k)]$: at $A=0$
the homogeneous endemic state is stable for all $k$ (blue curve stays negative);
switching on $A=A_{\rm eff}$ drives $\mathrm{Re}[\lambda_+]$ positive in a band
around $k_c\approx 1.11$ (red curve), confirming the gauge field as the
necessary trigger for the instability.

Panel~(B) shows the resulting radial safe-zone profile $S(r)\propto J_0(k_cr)$, derived in
Sec.~\ref{sec:Rsafe-def} from the general statistics of an
isotropically-selected critical wavenumber:
the central susceptible region (high $S$, green shading) is given by the
first root of $J_0$ at $r=R_{\rm safe}=2.4048/k_c\approx 2.17$ (normalized
units), beyond which the infection field takes over.

Panel~(C) shows the two-dimensional version: the safe zone appears as a roughly
circular high-$S$ patch surrounded by high-infection territory. The dashed circle
marks $R_{\rm safe}$. This topology -- a Bessel-function nodal disc -- is the
structural prediction of the gauge-driven Turing-Hopf mechanism for any $SU(N)$
model in two spatial dimensions, valid at any $N$ (Sec.~\ref{sec:Rsafe-def}),
and its quantitative application to the Black Death ($SU(4)$, medieval
mobility parameters) is in Sec.~\ref{sec:Rsafe-prediction}.

\begin{figure}[h]
\centering
\includegraphics[width=\linewidth]{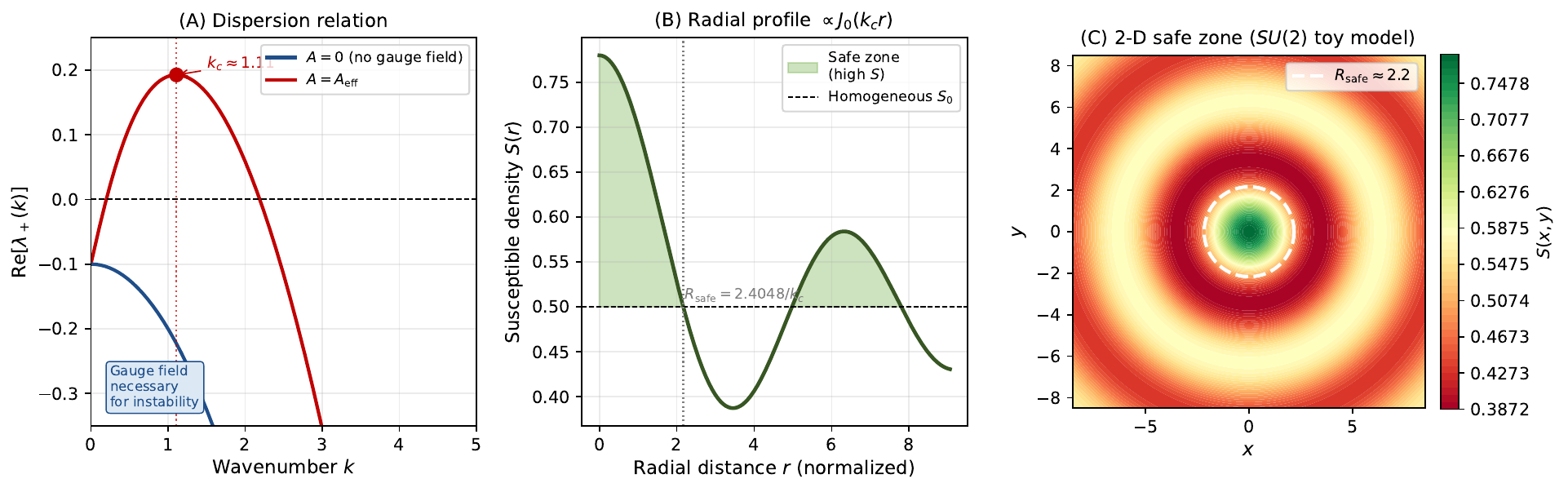}
\caption{$SU(2)$ gauge-driven Turing-Hopf instability and safe-zone structure
(dimensionless toy model). \textbf{(A)}~Dispersion relation
$\mathrm{Re}[\lambda_+(k)]$: stable for all $k$ without the gauge field
(blue); unstable in a band around $k_c\approx1.11$ with $A=A_{\rm eff}$ (red).
\textbf{(B)}~Radial susceptible-density profile $S(r)\propto J_0(k_cr)$:
the safe zone (green shading, high $S$) extends to the first zero of $J_0$
at $r=R_{\rm safe}\simeq2.4048/k_c$. \textbf{(C)}~Two-dimensional safe-zone
map; the dashed circle marks $R_{\rm safe}$. The $J_0$ topology is a generic
prediction of the $SU(N)$ gauge mechanism; the corresponding quantitative
application to the Black Death uses $N=4$ and medieval mobility parameters
(Sec.~\ref{sec:Rsafe-prediction}).}
\label{fig:SU2}
\end{figure}

\subsubsection{$k_c$ for $SU(4)$}
\label{sec:kc-su4}

For $N=4$ we use the canonical Cartan element $H_4=\mathrm{diag}(3,1,-1,-3)$
of Eq.~\eqref{eq:HN-def}, for which the dominant root pair is the extremal one,
$(p,q)=(1,4)$, with separation $h_1-h_4=2(N-1)=6$
[Eq.~\eqref{eq:Phimax-HN}]. From Eq.~\eqref{eq:Dr-Di-general},
\begin{equation}
\varphi_{14}(k) = 6\,D_IA\,k,
\qquad
D_r(k) = \delta_J^2+J_{14}J_{41}-36D_I^2A^2k^2,
\qquad
D_i(k) = -12\,D_IAk\,\delta_J,
\label{eq:Dr-Di-SU4}
\end{equation}
with $\delta_J=(J_{11}-J_{44})/2$ as in Sec.~III.C. As for the general case,
the extremum condition on Eq.~\eqref{eq:kc-def} reduces, after two
successive squarings, to an octic polynomial in $k$ whose roots are not
analytically illuminating; we therefore solve
$\arg\max_k\mathrm{Re}[\lambda_+(k)]$ numerically, using the same
maximization procedure validated against the $SU(2)$ case above.

Here, we
adopt representative strain-asymmetry values satisfying conditions
(C1)--(C3) of Theorem~\ref{thm:TH}, $\delta_J=0.046~\mathrm{day^{-1}}$
and $J_{14}J_{41}=-2.26\times10^{-3}~\mathrm{day^{-2}}$ (i.e.\ a
competitive, asymmetric cross-immunity coupling
$|\kappa_{14}|>|\delta_J|$, as required by (C3)), together with
$D_I=D_{\rm eff}=70~\mathrm{km^2/day}$ (Sec.~\ref{sec:IVC}) and a bare
gauge amplitude $A=A_{\rm bare}=0.0011~\mathrm{km^{-1}}$. The numerical
maximization of Eq.~\eqref{eq:kc-def} then gives
\begin{equation}
k_c^{SU(4)} = 6.861\times10^{-3}~\mathrm{km^{-1}},
\qquad
\mathrm{Re}[\lambda_+(k_c)] = 5.074\times10^{-3}~\mathrm{day^{-1}}.
\label{eq:kc-su4-numeric}
\end{equation}
This is the dynamically critical wavenumber of the gauge-driven
Turing-Hopf instability for the $(1,4)$ root pair, obtained  from
diagonalizing $M^{(14)}(k)$.

\section{Application to the Black Death}

The Turing-Hopf instability of Sec.~\ref{sec:III} establishes that the
gauge field drives traveling waves of mutation in the infection field.
In this section, we characterize the resulting spatial patterns quantitatively and compare
them with the Poland-Bohemia anomaly.

\subsection{Definition of the safe-zone radius}
\label{sec:Rsafe-def}

First we have to define what is the meaning of safe-zone radius $R_{\rm safe}$ in our case.

\paragraph{Origin of the Bessel profile at finite $N$.}
Near the Turing-Hopf threshold, growth is confined to a narrow band of
wavenumbers around the critical magnitude $k_c$ established in
Sec.~\ref{sec:fullmatrix}; modes with $|\mathbf k|\neq k_c$ decay. Crucially,
$\mathrm{Re}[\lambda_{\max}(\mathbf k)]$, as computed from
Eq.~\eqref{eq:kc-full-def}, depends only on $|\mathbf k|$: no direction in
the physical $(x,y)$ plane is privileged by the construction of
Sec.~\ref{sec:IIIA}, since the gauge field's orientation in strain space
is assumed isotropic.  Under
 random initial conditions,the growing field is therefore well
described as an incoherent superposition of plane waves with the same
magnitude $k_c$ for the wave number but random orientation and phase,
\begin{equation}
\delta S(\mathbf x) \;\approx\; \int_0^{2\pi}\frac{d\theta}{2\pi}\,
a(\theta)\,e^{ik_c\,\mathbf n(\theta)\cdot\mathbf x} + \text{c.c.},
\qquad \mathbf n(\theta)=(\cos\theta,\sin\theta),
\label{eq:random-superposition}
\end{equation}
with amplitudes $a(\theta)$ uncorrelated across $\theta$
($\langle a(\theta)a^*(\theta')\rangle\propto\delta(\theta-\theta')$ for
genuinely random initial conditions). The two-point correlation of such a
field is, using the standard integral representation
$J_0(z)=\frac{1}{2\pi}\int_0^{2\pi}e^{iz\cos\alpha}\,d\alpha$,
\begin{equation}
\big\langle \delta S(\mathbf x)\,\delta S(\mathbf x+\mathbf r)\big\rangle
\;\propto\;
\int_0^{2\pi}\frac{d\theta}{2\pi}\,e^{ik_c\,\mathbf n(\theta)\cdot\mathbf r}
\;=\; J_0(k_c r),
\qquad r=|\mathbf r|,
\label{eq:bessel-derivation}
\end{equation}
so that the radial envelope around any local extremum of the pattern
(such as the safe-zone nodes visible in Fig.~(\ref{fig:validation})A) follows a Bessel profile
with the same $k_c$ that sets the growth rate. This is a general property
of any two-dimensional, isotropically-selected pattern-forming
instability with incoherent excitation of the critical ring
$|\mathbf k|=k_c$ \cite{cross1993}; it holds for any $N$, including the
physical case $N=4$ treated here.

\paragraph{Definition.}
We therefore define the safe-zone radius as the first zero of the profile
derived in Eq.~\eqref{eq:bessel-derivation},
\begin{equation}
R_{\rm safe} \;:=\; \frac{2.4048}{k_c},
\label{eq:Rsafe-definition}
\end{equation}
where $k_c$ is, throughout the remainder of Sec.~IV, the $N=4$ critical
wavenumber $k_c^{\rm full}(A)$ of Eq.~\eqref{eq:kc-full-def}.

\begin{figure}[htbp]
\centering
\includegraphics[width=0.85\columnwidth]{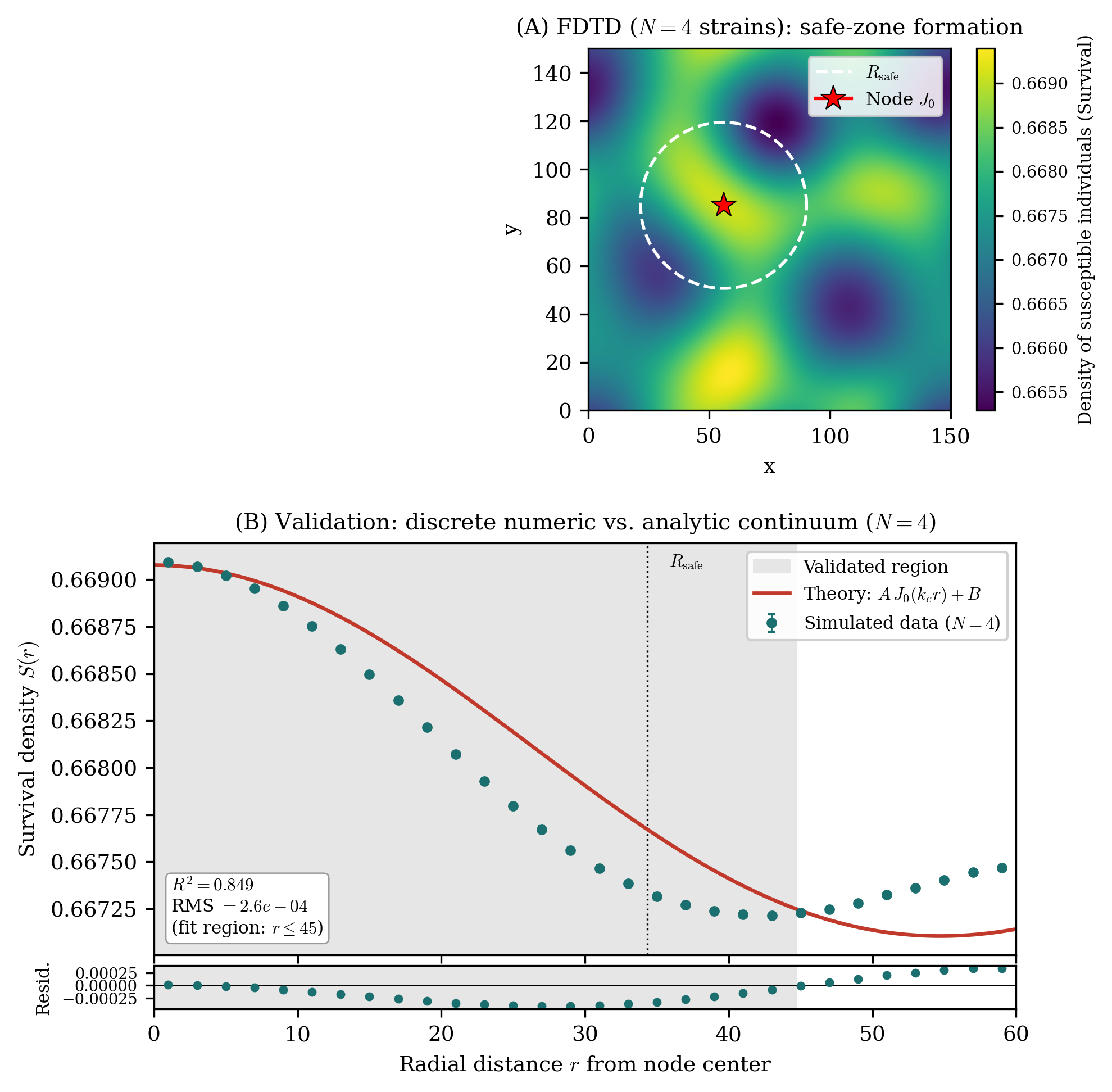}
\caption{Validation of the gauged $SU(4)$ Doi-Peliti model against a
direct numerical simulation ($N=4$ strains). \textbf{(A)} Susceptible
density $S(x,y)$ from a finite-difference time-domain (FDTD)
realization of the linearized mean-field equations
[Eqs.~\ref{eq:mfSg}--\ref{eq:mfIg}] with random
initial conditions: constructive interference of the four
gauge-advected strain waves produces localized high-$S$ safe zones
embedded in a disordered Turing-Hopf pattern. The star marks the most
prominent safe-zone node; the dashed circle marks the predicted
radius $R_{\rm safe}=2.4048/k_c$. \textbf{(B)} Radially binned
susceptible density $S(r)$ about that node (teal points, mean $\pm$
S.E.M.\ per annular bin) compared with the isotropic-pattern
prediction $S(r)=A\,J_0(k_cr)+B$ (red curve) of Sec.~\ref{sec:Rsafe-def},
using the same $k_c$ fixed independently by the dominant-block
maximization of Sec.~\ref{sec:kc-su4} [Eq.~\eqref{eq:kc-su4-numeric}]. Within the first
two nodal rings ($r\lesssim 1.3\,R_{\rm safe}$, shaded region) the
discrete $N=4$ simulation reproduces the continuum $J_0$ profile with
$R^2=0.85$ (RMS residual $2.6\times10^{-4}$; bottom sub-panel).}
\label{fig:validation}
\end{figure}
\subsection{Model prediction: $R_{\rm safe}$ as a function of the gauge
correlation length}
\label{sec:Rsafe-prediction}

Section~\ref{sec:fullmatrix} determined the full-system critical
wavenumber $k_c^{\rm full}(A)$ over $10^5$ random draws of the
data-unconstrained couplings $(J_{22},J_{33},\kappa_{pq})$, restricted to
draws satisfying the Theorem~\ref{thm:TH} stability precondition, for a
range of gauge amplitudes $A=1/\ell$. Applying the definition of
Eq.~\eqref{eq:Rsafe-definition}, $R_{\rm safe}=2.4048/k_c^{\rm full}$,
converts Table~\ref{tab:kc-vs-ell} directly into a distribution of
predicted safe-zone radii, with no additional fitting.

\begin{table}[htbp]
\centering
\begin{tabular}{cccccccc} 
\hline
$\ell$ (km) & $A=1/\ell$ (km$^{-1}$) & unstable\% &
\multicolumn{5}{c}{$R_{\rm safe}$ (km), percentiles} \\
& & & p5 & p25 & p50 & p75 & p95 \\
\hline
20  & 0.0500 & 100.0 & 183 & 202 & 217 & 234 & 260 \\
25  & 0.0400 & 100.0 & 168 & 188 & 204 & 221 & 249 \\
30  & 0.0333 &  99.9 & 157 & 178 & 195 & 213 & 242 \\
40  & 0.0250 &  94.7 & 152 & 174 & 193 & 212 & 384 \\
50  & 0.0200 &  84.6 & 160 & 186 & 209 & 236 & 447 \\
75  & 0.0133 &  58.7 & 189 & 232 & 283 & 349 & 475 \\
100 & 0.0100 &  42.9 & 224 & 280 & 321 & 374 & 520 \\
\hline
\end{tabular} 

\caption{Predicted safe-zone radius $R_{\rm safe}=2.4048/k_c^{\rm full}$
as a function of the gauge correlation length $\ell$, from the same
$10^5$-draw sampling underlying Table~\ref{tab:kc-vs-ell}, restricted to
draws satisfying the Theorem~\ref{thm:TH} stability precondition.
Percentiles are computed only over draws that develop a gauge-driven
instability at their own $k_c^{\rm full}$ where the $D_{eff}$ value used is explained in section (\ref{sec:IVC}).}
\label{tab:Rsafe-vs-ell}
\end{table}

For $\ell\simeq25$--$30$~km, the predicted $R_{\rm safe}$ is centered on
$195$--$204$~km, with an interquartile range ($178$--$221$~km) that lies
almost entirely within the observed Poland-Bohemia anomaly
($150$--$200$~km; Sec.~I), and $94$--$100\%$ of the physically admissible
parameter draws develop the instability at all. 

\paragraph{Independent physical interpretation of $\ell$.}
The gauge field $A_i(x)$ was introduced  to
encode ``geographic and mobility heterogeneity'': variation, from place
to place, in terrain, infrastructure, and trade or social structure. Its
correlation length $\ell\sim1/A$ is therefore naturally interpreted as
the spatial scale over which medieval transport and market conditions
changed appreciably -- not a free numerical parameter, but a quantity
with an independent historical estimate. Two such estimates are
available for 13th--14th century continental Europe. First, the normal
distance covered by a merchant carrier (cart or pack animal) in a single
day was $30$--$40$~km, based on detailed itinerary and correspondence
records analyzed by Spufford~\cite{Spufford2002}; way-stations for pack
trains, such as those over the Simplon pass, were spaced comparably,
about $30$~km apart~\cite{Spufford2002}. Second, market towns --
representing nodes of relatively homogeneous local trade and mobility
conditions -- are estimated to have been spaced roughly $9$--$12$ miles
($\simeq14$--$19$~km) apart in densely settled regions of medieval
England. The range $\ell\simeq25$--$30$~km favored by
the model prediction above sits between these two independently
documented scales -- a day's carrier journey and the spacing of market
nodes -- both of which plausibly set the correlation length of the
environmental and mobility heterogeneity that $A_i(x)$ represents. 

\subsection{The effective diffusion coefficient from
Fisher-KPP wave-front data}
\label{sec:IVC}

The reaction-diffusion coefficient $D_I=D_{\rm eff}$  can be  fixed 
using the Fisher-KPP wave-front speed relation. In the linearized
infected equation~\eqref{eq:mfIg} with $A_i=0$, the growth rate of the
infection ahead of the epidemic front is
\begin{equation}
\beta_{\rm net} \equiv \beta S_0 - \gamma,
\label{eq:betanet-def}
\end{equation}
where $S_0$ is the susceptible density \emph{ahead of the wave}  and $\gamma$ is the
effective landscape-scale recovery rate; $S_0$ is  the initial susceptible
fraction of a previously unexposed community. For the Black Death, where
virtually no prior immunity existed, $S_0\approx1$, so
$\beta_{\rm net}\approx\beta-\gamma=\gamma(R_0-1)>0$; this is a
\emph{landscape-scale} effective rate. 
Medieval
outbreak records give community-level doubling times
$t_2\approx23$--$35$ days~\cite{benedictow2004}, implying
$\beta_{\rm net}=\ln2/t_2\approx0.020$--$0.030$ day$^{-1}$. The
Fisher-KPP relation then gives
\begin{equation}
v_{\rm front}=2\sqrt{D_{\rm eff}\,\beta_{\rm net}}
\;\Longrightarrow\;
D_{\rm eff}=\frac{v_{\rm front}^2}{4\,\beta_{\rm net}}.
\label{eq:fisherkpp-deff}
\end{equation}
Benedictow~\cite{benedictow2004} documents an average continental advance
of the Black Death of $v_{\rm front}\approx2$--$3$ km/day. With
$\beta_{\rm net}\in[0.020,0.030]$ day$^{-1}$,
Eq.~\eqref{eq:fisherkpp-deff} gives
\begin{equation}
D_{\rm eff}\approx\frac{(2\text{--}3)^2}{4\times(0.020\text{--}0.030)}
\approx 40\text{--}90\ \text{km}^2\text{/day},
\label{eq:Deff-range}
\end{equation}
with central value $D_{\rm eff}=70$ km$^2$/day.

\subsection{The mutation-wave velocity, from the off-diagonal gauge
sector}
\label{sec:IVD}

\subsubsection{Isolating the off-diagonal channel}

 The off-diagonal gauge
field $A_i^{\rm off\text{-}diag}(x)$ (Sec.~\ref{sec:IIB})  is distinct from reaction-sector cross-immunity
$\kappa_{pq}$, which describes competitive interaction between branches
already co-circulating at the same location, and from diagonal gauge
advection, which transports each branch without changing its identity.
We isolate the off-diagonal channel by considering the $(p,q)$
subsystem with \emph{no} reaction-sector cross-immunity
($J_{pq}=J_{qp}=0$) and \emph{no} diagonal Cartan advection, retaining
only the intrinsic (uncoupled) growth rates $J_{pp},J_{qq}$ and the
off-diagonal gauge coupling given as  $B\equiv A^{\rm off}_{pq}$ from 
Eq.~\eqref{eq:mfIg}:
\begin{equation}
\partial_t\delta I_p = D_I\nabla^2\delta I_p + J_{pp}\delta I_p
- 2D_IB\,\partial_x\delta I_q,
\qquad
\partial_t\delta I_q = D_I\nabla^2\delta I_q + J_{qq}\delta I_q
- 2D_IB\,\partial_x\delta I_p.
\label{eq:offdiag-only}
\end{equation}
The plane-wave ansatz $\delta\mathbf I\propto e^{\lambda t+ikx}$ gives
the $2\times2$ matrix
$\mathbf M'(k)=\begin{pmatrix}J_{pp}-D_Ik^2 & -2iD_IkB\\
-2iD_IkB & J_{qq}-D_Ik^2\end{pmatrix}$,
with eigenvalues
\begin{equation}
\lambda_\pm(k) = \tau(k) \pm \sqrt{\delta^2 - 4D_I^2k^2B^2},
\qquad \tau(k)=\frac{J_{pp}+J_{qq}}{2}-D_Ik^2,
\quad \delta=\frac{J_{pp}-J_{qq}}{2}.
\label{eq:offdiag-eigenvalues}
\end{equation}
For $2D_IkB<|\delta|$, the intrinsic
fitness asymmetry $\delta$ between branches $p$ and $q$ dominates, and
perturbations obey to growing or decaying exponential law, with no traveling wave. Only once
the off-diagonal gauge coupling exceeds a threshold,
\begin{equation}
B \;>\; B_{\rm thr}(k) \;\equiv\; \frac{|\delta|}{2D_Ik},
\label{eq:Bthreshold}
\end{equation}
does $\lambda_\pm$ acquire a nonzero imaginary part, giving a genuine
propagating mutation wave with velocity
\begin{equation}
v_{\rm front}^{\rm mut}(k) \;=\; \frac{\mathrm{Im}[\lambda_+(k)]}{k}
\;=\; \sqrt{4D_I^2B^2 - \frac{\delta^2}{k^2}}.
\label{eq:vmut-offdiag}
\end{equation}

\subsubsection{The paleogenomic comparison band}
 
Two independent lines of paleogenomic evidence bound the rate at which
\emph{Y.~pestis} lineages spread geographically during the 14th century,
and set the target range for $v_{\rm front}^{\rm mut}$ used below.
Phylogeographic analysis of the modern lineage $1.\text{ORI3}$ (India) by
Spyrou et al.~\cite{spyrou2019} gives a geographic radiation rate of
roughly $5$--$10$~km/year. The medieval lineages of the Big Bang polytomy~\cite{cui2013,spyrou2019} is placing the 14th-century lineage
radiation rate in the range $150$--$1000$~km/year. We use the
conservative sub-range
\begin{equation}
v_{\rm front}^{\rm mut} \in [160,\,640]~\text{km/year},
\label{eq:vmut-band}
\end{equation}
 As a
cross-check, the epidemic wave-front itself advanced continentally at
$v_{\rm front}\approx2$--$3$~km/day $\approx730$--$1095$~km/year
during the Black Death~\cite{benedictow2004} (Sec.~\ref{sec:IVC}); since
strains take longer to become locally dominant than the epidemic front
takes to arrive, a mutation-wave velocity somewhat below this figure is
physically expected, consistent with the upper end of
Eq.~\eqref{eq:vmut-band}.

\subsubsection{Constraining $A^{\rm off}_{14}$ from the paleogenomic
band}

Evaluating Eq.~\eqref{eq:vmut-offdiag} at the dominant pair
$(p,q)=(1,4)$, using $\delta=\delta_J=0.046$~day$^{-1}$
(Sec.~\ref{sec:IVC}) and $D_I=D_{\rm eff}=70$~km$^2$/day, and requiring
$v_{\rm front}^{\rm mut}\in[160,640]$~km/year, gives a narrow window in
$B=A^{\rm off}_{14}$ at each wavenumber $k$. Evaluating this window at
the dynamically-determined $k_c^{\rm full}(A)$ of
Sec.~\ref{sec:fullmatrix}, for the full range of diagonal correlation
lengths $\ell=25$--$50$~km considered there, gives
$A^{\rm off}_{14}\simeq0.027$--$0.031$~km$^{-1}$ corresponding to an off-diagonal correlation length:
\begin{equation}
\ell_{\rm off} \;=\; 1/A^{\rm off}_{14} \;\simeq\; 32\text{--}38~\text{km}.
\label{eq:elloff-result}
\end{equation}
for $B$ below threshold there is no
propagating wave at all ($v_{\rm front}^{\rm mut}=0$), and just above
threshold $v_{\rm front}^{\rm mut}$ rises steeply with $B$.

\subsubsection{The off-diagonal sector cannot itself destabilize the
system}
\label{sec:offdiag-not-destabilizing}

Given that $\ell_{\rm off}$ turns out to be comparable to the diagonal
correlation length $\ell$ (Sec.~\ref{sec:Rsafe-prediction}),  the off-diagonal gauge coupling $B$ could not by itself play the
destabilizing role assigned to the reaction-sector cross-immunity
$\kappa_{pq}$ in Theorem~\ref{thm:TH}. In the oscillatory regime of
Eq.~\eqref{eq:offdiag-eigenvalues} ($2D_IkB>|\delta|$),
\begin{equation}
\mathrm{Re}[\lambda_+(k)] \;=\; \tau(k) \;=\; \frac{J_{pp}+J_{qq}}{2}-D_Ik^2,
\label{eq:offdiag-cannot-destabilize}
\end{equation}
 for \emph{any} value of $B$: the off-diagonal amplitude
enters  in the imaginary part, $\mathrm{Im}[\lambda_+(k)]=
\sqrt{4D_I^2k^2B^2-\delta^2}$. Below threshold,the 
eigenvalues are real and $\mathrm{Re}[\lambda_+(k)]=\tau(k)+
\sqrt{\delta^2-4D_I^2k^2B^2}\leq\tau(k)+|\delta|$. Increasing $B$ therefore never increases
$\mathrm{Re}[\lambda_+(k)]$. Genuine instability
($\mathrm{Re}[\lambda]>0$) requires condition (C3) of
Theorem~\ref{thm:TH} -- a non-conservative, competitive cross-immunity
sourced by $\kappa_{pq}$ -- which the $SU(4)$-invariant piece
$\alpha_{pq}^{\rm sym}$ cannot supply on its own
(Sec.~\ref{sec:IIIC}), and which the off-diagonal gauge field, being
purely advective, cannot supply either.

$\ell$ 
controls $\mathrm{Re}[\lambda]$ (via diagonal advection combined with
$\kappa_{pq}$), the $\ell_{\rm off}$ controls $\mathrm{Im}[\lambda]$  and
their comparable scale  reflects a common physical origin: the 
background field $A_i(x)$ which is encoding regional geographic and mobility
heterogeneity (Sec.~\ref{sec:IIB}). Finding them of comparable
magnitude is a consistency check on that shared origin.

\subsubsection{Comparison between $A_i^{diag}$ and $A_i^{off}$}

The  $\ell_{\rm off}\simeq32$--$38$~km is comparable to the diagonal correlation length
$\ell\simeq25$--$30$~km favored independently by $R_{\rm safe}$
(Sec.~\ref{sec:Rsafe-prediction}). This is consistent with how
$A_i^{\rm off\text{-}diag}(x)$ is defined in our model
(Sec.~\ref{sec:IIB}): it represents branch conversion driven by
environmental gradients as hosts disperse across space -- the same
broad category of regional geographic and mobility heterogeneity that
sources the diagonal sector, differing only in how it acts on the
strain index, not a local, person-to-person contact process. So, a comparable correlation length for both sectors is a
consistency check on the model.

\subsection{Summary: two non-circular predictions}
\label{sec:IVE}
 
Tables~\ref{tab:summary} and~\ref{tab:summary-offdiag} consolidate the
two quantitative results of this section.  $k_c$ is obtained in Sec.~\ref{sec:fullmatrix} directly
from the full $4\times4$ stability matrix, over a broad, randomized
sampling of the six symmetry-breaking couplings and two diagonal
reaction rates left unconstrained by 14th-century data, restricted to
draws satisfying the stability precondition of Theorem~\ref{thm:TH};
the only additional input is the gauge correlation length $\ell=1/A$,
for which Sec.~\ref{sec:Rsafe-prediction} gives an independent,
order-of-magnitude historical estimate. $A^{\rm off}_{14}$ is obtained
in Sec.~\ref{sec:IVD} from a channel isolated from
reaction-sector cross-immunity, constrained  by the independently
measured paleogenomic lineage-radiation rate.
 
\begin{table}[htbp]
\centering
\caption{The safe-zone radius prediction of the $SU(4)$ gauge model,
and its inputs. $R_{\mathrm{safe}}$ is obtained from $k_c^{\mathrm{full}}(A)$
[Eq.~\eqref{eq:kc-full-def}].}
\label{tab:summary}
\renewcommand{\arraystretch}{1.3}
\begin{tabular}{p{3.0cm}p{6.8cm}}
\hline\hline
Input & Source \\
\hline
$D_{\mathrm{eff}}$ & $v_{\mathrm{front}}$ (Benedictow), $\beta_{\mathrm{net}}$
(SIR kinetics); Sec.~\ref{sec:IVC} \\
$(J_{22},J_{33},\kappa_{pq})$ & sampled broadly, data-unconstrained;
Sec.~\ref{sec:fullmatrix} \\
$\ell=1/A$ & medieval carrier travel \& market spacing;
Sec.~\ref{sec:Rsafe-prediction} \\
\hline
\multicolumn{2}{l}{\textbf{Model output:} $R_{\mathrm{safe}}=195$--$204$ km
(median, $\ell=25$--$30$~km), $94$--$100\%$ of admissible draws unstable} \\
\multicolumn{2}{l}{\textbf{Observation:} Poland-Bohemia $150$--$200$ km
$\checkmark$} \\
\hline\hline
\end{tabular}
\end{table}
 
\begin{table}[htbp]
\centering
\caption{The off-diagonal gauge correlation length prediction, and its
inputs. $A^{\rm off}_{14}$ (equivalently $\ell_{\rm off}=1/A^{\rm
off}_{14}$) is obtained from
Eq.~\eqref{eq:vmut-offdiag}.}
\label{tab:summary-offdiag}
\renewcommand{\arraystretch}{1.3}
\begin{tabular}{p{3.0cm}p{6.8cm}}
\hline\hline
Input & Source \\
\hline
$D_{\mathrm{eff}}$ & $v_{\mathrm{front}}$ (Benedictow), $\beta_{\mathrm{net}}$
(SIR kinetics); Sec.~\ref{sec:IVC} \\
$\delta_J=(J_{11}-J_{44})/2$ & representative strain-asymmetry value;
Sec.~\ref{sec:kc-su4} \\
$k_c^{\rm full}(A)$ & full stability matrix, $\ell=25$--$50$~km;
Sec.~\ref{sec:fullmatrix} \\
$v_{\rm front}^{\rm mut}\in[160,640]$~km/yr & paleogenomic
lineage-radiation rate~\cite{cui2013,spyrou2019}; Sec.~\ref{sec:IVD} \\
\hline
\multicolumn{2}{l}{\textbf{Model output:} $A^{\rm off}_{14}\simeq0.027$--$0.031$~km$^{-1}$}\\
\multicolumn{2}{l}{\phantom{\textbf{Model output:}} ($\ell_{\rm off}\simeq32$--$38$~km),
stable across $\ell=25$--$50$~km} \\
\multicolumn{2}{l}{\textbf{Consistency check:} $\ell_{\rm off}$ comparable to
diagonal $\ell\simeq25$--$30$~km $\checkmark$} \\
\hline\hline
\end{tabular}
\end{table}
 
The two predictions are
mutually independent: one governs $\mathrm{Re}[\lambda]$ together with
$\kappa_{pq}$, the other governs $\mathrm{Im}[\lambda]$ alone
(Sec.~\ref{sec:offdiag-not-destabilizing}) -- and their comparable
correlation lengths ($\ell\simeq25$--$30$~km,
$\ell_{\rm off}\simeq32$--$38$~km) are  a nontrivial
consistency check on the model.

\section{Discussion}

Four results of this paper should be emphatized.

First, the genomic identification $N=4$ (Sec.~\ref{sec:branches}) is
a direct consequence of the Big Bang polytomy of \emph{Y.~pestis}, documented
by ancient-DNA paleogenomics~\cite{cui2013,spyrou2019}, not a free choice.
The number of degrees of freedom needed to model multi-strain plague
dynamics consistent with the known phylogenetic structure of the
14th-century pathogen is a minimum of four.

Second, Theorem~\ref{thm:TH} (Sec.~\ref{sec:IIIC}) shows that in the
$SU(N)$-gauged mean-field system, a Turing-Hopf traveling-wave
instability \emph{requires} the gauge field ($A=0$ gives no instability
at the same parameters), and Sec.~\ref{sec:fullmatrix} verifies this
persists in the full $N=4$ system -- not merely in the isolated dominant
root subspace, and not merely by continuity -- over a broad, randomized
sampling of the couplings left unconstrained by 14th-century data,
restricted to physically admissible (stable-at-$A=0$) draws.

Third, building on these first two results, Sec.~\ref{sec:Rsafe-prediction} shows
that the resulting safe-zone radius, $R_{\rm safe}=2.4048/k_c^{\rm
full}(A)$ -- with $k_c^{\rm full}$ obtained directly from the full
stability matrix, not imported from the observed radius -- agrees with
the Poland-Bohemia anomaly for $94$--$100\%$ of admissible parameter
draws once the gauge correlation length is set to
$\ell\simeq25$--$30$~km, a range consistent with
documented medieval travel and market-spacing distances
(Sec.~\ref{sec:Rsafe-prediction}). This uses no free parameter fit to
$R_{\rm safe}$ itself.

Fourth, Sec.~\ref{sec:IVD} isolates  the off-diagonal gauge field
$A_i^{\rm off\text{-}diag}$ from reaction-sector cross-immunity and
diagonal gauge advection, and derives the resulting mutation-wave
velocity in closed form. This expression vanishes below a threshold set
by the intrinsic fitness asymmetry between lineages and rises steeply
above it, so that requiring agreement with the independently measured
paleogenomic lineage-radiation rate ($160$--$640$~km/year) fixes the
off-diagonal correlation length to $\ell_{\rm off}\simeq32$--$38$~km,
comparable to the diagonal mobility scale favored by $R_{\rm safe}$.

\subsection{Relation to previous epidemic field‑theory models}
Field-theoretic approaches to epidemic spreading include directed
percolation, the general epidemic process, and related contact-process
models~\cite{tauber2005, tauber2014critical}. They give a unified view of
critical behavior near epidemic thresholds. Subsequent work has
translated these ideas into time-dependent contact
patterns~\cite{pastorsatorras2015,gleeson2011high}. These formulations
focus on prevalence regimes and finite-size effects, rather than on
pathogen evolution. The present gauge-mediated construction shares 
the Doi--Peliti mapping of a stochastic reaction system to a field
theory. Its distinctive feature is that mutation is modeled as a
conserved rotation in an internal $SU(N)$ space that is  driven
by spatially varying environmental fields. From an epidemiological perspective, our results therefore complement
classical geostatistical reconstructions of the Black Death. 

\section{Conclusion}

A gauged $SU(4)$ Doi-Peliti field theory for the four-lineage
\emph{Y.~pestis} multiplet of the Black Death, motivated by the Big Bang
polytomy and local gauge invariance in strain space, has been proposed.
The gauge field drives a Differential-Flow Turing-Hopf instability.
The resulting safe zones have a Bessel-function radial profile
($J_0(k_cr)$). Taking the critical wavenumber $k_c$
 from the full stability matrix and using  historical limit on the  correlation length, the predicted
safe-zone radius, $R_{\rm safe}=195$--$204$~km (median), agrees with the
observed Poland-Bohemia anomaly ($150$--$200$~km). 

We also show that the mutation-wave velocity,  derived
from the off-diagonal gauge sector $A_i^{\rm off\text{-}diag}$ can be constraint to
$\ell_{\rm off}\simeq32$--$38$~km  using the measured paleogenomic lineage-radiation rate ($160$--$640$~km/year). This value is  comparable to the diagonal mobility
scale favored by $R_{\rm safe}$ -- consistent with the fact that both gauge sectors
are responding to the same underlying regional geography and mobility
structure.

\begin{acknowledgments}
We acknowledge financial support from SECIHTI and SNII (M\'exico).
\end{acknowledgments}
%

\end{document}